\documentclass[a4paper,11pt]{article}

\usepackage{algorithm}
\usepackage{algpseudocode}
 \usepackage{microtype}
\usepackage{pdfpages}
\usepackage[T1]{fontenc}
 \usepackage{fullpage}
\usepackage{lmodern}
\usepackage{graphicx}
\usepackage{amsmath}
\usepackage{xspace}
\usepackage{amssymb}
\usepackage{amsthm}
\usepackage{tabularx}
\usepackage{multirow}
\usepackage{enumerate}
\usepackage{hyperref}

\newtheorem{theorem}{Theorem} 
\newtheorem{lemma}{Lemma}

\newtheorem{corollary}{Corollary}
\newtheorem{observation}{Observation}

\graphicspath{{./}}

\bibliographystyle{plainurl}

\title{Farthest-point Voronoi diagrams in the presence of rectangular obstacles\thanks{
This research was partly supported by the Institute of Information \& communications Technology Planning \& Evaluation(IITP) grant funded by the Korea government(MSIT) (No. 2017-0-00905, Software Star Lab (Optimal Data Structure and Algorithmic Applications in Dynamic Geometric Environment)) and (No. 2019-0-01906, Artificial Intelligence Graduate School Program(POSTECH)).}} 


\author{Mincheol Kim\thanks{Department of Computer Science and Engineering, Pohang University of Science and Technology, Pohang, Korea. {\tt \{rucatia, sloth\}@postech.ac.kr}}
\and
Chanyang Seo\thanks{Graduate School of Artificial Intelligence, Pohang University of Science and Technology, Pohang, Korea. {\tt chan8616@postech.ac.kr}}
\and
Taehoon Ahn\footnotemark[2]
\and
Hee-Kap Ahn\thanks{Graduate School of Artificial Intelligence, Department of Computer Science and Engineering, Pohang University of Science and Technology, Pohang, Korea. {\tt heekap@postech.ac.kr}}
}

\DeclareMathOperator*{\argmax}{arg\,max}

\newcommand{\eps}{\ensuremath{\varepsilon}}

\newcommand{\stacktop}{\ensuremath{\mathsf{top}}}
\newcommand{\freespace}{\ensuremath{\mathsf{F}}}
\newcommand{\realspace}{\ensuremath{\mathbb{R}}}
\newcommand{\freespacenoempty}{\ensuremath{\mathsf{F}\setminus C_\emptyset}}

\newcommand{\sset}{\ensuremath{\mathsf{S}}}
\newcommand{\rectset}{\ensuremath{\mathsf{R}}}

\newcommand{\da}{\ensuremath{d_\alpha}}
\newcommand{\db}{\ensuremath{d_\beta}}
\newcommand{\dba}{\ensuremath{d_{\beta\alpha}}}

\newcommand{\tree}{\mathcal{T}}
\newcommand{\boundarylist}{\mathcal{B}}
\newcommand{\distlist}{\mathcal{D}}
\newcommand{\monolist}{\ensuremath{X}}

\newcommand{\nx}{\ensuremath{{x^-}}}
\newcommand{\px}{\ensuremath{{x^+}}}
\newcommand{\ny}{\ensuremath{{y^-}}}
\newcommand{\py}{\ensuremath{{y^+}}}
\newcommand{\rfm}{\ensuremath{\mathsf{M}}}
\newcommand{\rfmpy}{\ensuremath{\mathsf{M}_{y^+}}}
\newcommand{\rfmny}{\ensuremath{\mathsf{M}_{y^-}}}
\newcommand{\rfmpx}{\ensuremath{\mathsf{M}_{x^+}}}
\newcommand{\rfmnx}{\ensuremath{\mathsf{M}_{x^-}}}
\newcommand{\pl}{\ensuremath{{p}}}
\newcommand{\dspy}{\ensuremath{\mathsf{Q}_{y^+}}}
\newcommand{\dsny}{\ensuremath{\mathsf{Q}_{y^-}}}
\newcommand{\dspx}{\ensuremath{\mathsf{Q}_{x^+}}}
\newcommand{\dsnx}{\ensuremath{\mathsf{Q}_{x^-}}}

\newcommand{\boundbox}{\ensuremath{\mathsf{B}}}

\newcommand{\piru}{\ensuremath{\pi_{\mathsf{ru}}}}

\newcommand{\pilu}{\ensuremath{\pi_{\mathsf{lu}}}}
\newcommand{\pild}{\ensuremath{\pi_{\mathsf{ld}}}}
\newcommand{\pidl}{\ensuremath{\pi_{\mathsf{dl}}}}
\newcommand{\pidr}{\ensuremath{\pi_{\mathsf{dr}}}}
\newcommand{\pird}{\ensuremath{\pi_{\mathsf{rd}}}}

\newcommand{\fvd}{\ensuremath{\mathsf{FVD}}}
\newcommand{\mbisector}{\ensuremath{B}}
\newcommand{\trace}{\ensuremath{T}}
\newcommand{\face}{\ensuremath{\mathsf{f}}}

\begin{document}
\date{}
\thispagestyle{empty}
\maketitle

\begin{abstract}
We present an algorithm to compute the geodesic $L_1$ 
farthest-point Voronoi diagram of $m$ point sites 
in the presence of $n$ rectangular obstacles in the plane.
It takes $O(nm+n \log n + m\log m)$ construction time using $O(nm)$ space.
This is the first optimal algorithm for 
constructing the farthest-point Voronoi diagram in the presence of obstacles.
We can construct a data structure in the same construction time and space that answers a farthest-neighbor query in $O(\log(n+m))$ time.
\end{abstract}

\section{Introduction}
A Voronoi diagram of a set of sites is a subdivision of the space under consideration
into subspaces by assigning points to sites with respect to a certain proximity. 
Typical Voronoi assignment models are
the nearest-point model and the farthest-point model where every point is assigned to 
its nearest site and its farthest site, respectively.
There are results for computing Voronoi diagrams in the plane~\cite{aggarwal1989,edelsbrunner1986,fortune1987,shamos1975},
under different metrics~\cite{chew1985, klein1988, lee1980, papadopoulou2001},
or for various types of sites~\cite{alt2005,cheong2011,papadopoulou2013}.

For $m$ point sites in the plane, the nearest-point and farthest-point 
Voronoi diagrams of the sites can be constructed in $O(m\log m)$ time~\cite{fortune1987,shamos1975}.
When the sites are contained in a simple polygon with no holes,
the distance between any two points in the polygon, called 
the \emph{geodesic distance},
is measured as the length of the shortest path contained in the polygon 
and connecting the points (called the \emph{geodesic path}).
There has been a fair amount of work computing 
the geodesic nearest-point and farthest-point Voronoi diagrams of $m$ point sites 
in a simple $n$-gon~\cite{aronov1989,aronov1993,oa20geodesic,ola20fgeodesic}
to achieve the lower bound $\Omega(n+m\log m)$~\cite{aronov1989}.
Recently, optimal algorithms of $O(n+m\log m)$ time were given for 
the geodesic nearest-point Voronoi diagram~\cite{oh2019} 
and for the geodesic farthest-point Voronoi diagram~\cite{wang2021}. 

The problem of computing Voronoi diagrams is more challenging 
in the presence of obstacles. Each obstacle plays as a hole and 
there can be two or more geodesic paths connecting two points avoiding those holes.
The geodesic nearest-point Voronoi diagram of $m$ point sites 
can be computed in $O(m\log m+k\log k)$ time by applying the continuous Dijkstra paradigm~\cite{hershberger1999},
where $k$ is the number of total vertices of obstacles.
However, no optimal algorithm is known for the farthest-point Voronoi diagram 
in the presence of obstacles in the plane, even when the obstacles are of elementary
shapes such as axis-aligned line segments and rectangles.
The best result of the geodesic farthest-point Voronoi diagram
known so far takes $O(mk\log^2(m+k)\log k)$ time by Bae and Chwa~\cite{bae2009}.
They also showed that the total complexity of the geodesic farthest-point 
Voronoi diagram is $\Theta(mk)$.

In the presence of $n$ rectangular obstacles under $L_1$ metric,
there are some work for farthest-neighbor queries. 
Ben-Moshe et al.~\cite{moshe2001} presented a data structure 
with $O(nm\log(n+m))$ construction time and $O(nm)$ space for $m$ point sites 
that supports farthest point queries in $O(\log(n+m))$ time. 
They also showed that the $L_1$ geodesic farthest-point Voronoi diagram 
has complexity $\Theta(nm)$, but without presenting any algorithm for computing the diagram.
Later Ben-Moshe et al.~\cite{moshe2005} gave a tradeoff between the query
time and the preprocessing/space such that a data structure of size $O((n+m)^{1.5})$ can be constructed in $O((n+m)^{1.5} \log^2(n+m))$
to support farthest point queries in $O((n+m)^{0.5}\log(n+m))$ time. 

The geodesic center of a set of objects in a polygonal domain 
is the set of points in the domain that minimize the maximum geodesic distance from input objects. Thus, it can be obtained once the geodesic 
farthest-point Voronoi diagram of the objects is constructed. 
For $m$ points in the presence of $n$ axis-aligned rectangular obstacles in the plane,
Choi et al.~\cite{choi1998} showed that 
the geodesic center of the points under the $L_1$ metric consists of $\Theta(nm)$ 
connected regions and they gave an $O(n^2m)$-time algorithm to compute 
the geodesic center.
Later, Ben-Moshe et al.~\cite{moshe2001} gave an $O(nm \log (n+m))$-time algorithm for
the problem.

\subparagraph*{Our Result.}
In this paper, we present an algorithm that computes the geodesic $L_1$ farthest-point Voronoi diagram of $m$ points
in the presence of $n$ rectangular obstacles in the plane
in $O(nm+n\log n+m\log m)$ time using $O(nm)$ space.
The running time and space complexity of our algorithm match the time and 
space bounds of the Voronoi diagram. Thus, it is the first optimal algorithm 
for computing the geodesic farthest-point Voronoi diagram 
in the presence of obstacles.

To do this, we construct a data structure for $L_1$ farthest-neighbor queries in $O(nm+n \log n + m\log m)$ time using $O(nm)$ space.
This improves upon the results by Ben-Moshe et al.~\cite{moshe2001}, and
the construction time and space are the best among the data structures supporting 
$O(\log (n+m))$ query time for $L_1$ farthest neighbors.
Then we present an optimal algorithm to compute the explicit geodesic 
$L_1$ farthest-point Voronoi diagram in $O(nm+n\log n+m\log m)$ time using 
$O(nm)$ space, which matches the time and space lower bounds of the
diagram.

As a byproduct, we compute the geodesic center under the $L_1$ metric 
in $O(nm+n\log n+m\log m)$ time.
This result improves upon the algorithm by Ben-Moshe et al.~\cite{moshe2001}.

\subparagraph*{Outline.} 
First, we construct four farthest-point maps, one for each of the four axis directions, 
either the $x$- or $y$-axis, and either positive or negative.
In the course, we construct a data structure for $L_1$ farthest-neighbor queries in
$O(nm+n \log n + m\log m)$ time using $O(nm)$ space.
For each axis direction, 
we apply the plane sweep technique with a line orthogonal
to the direction and moving along the direction.
During the sweep, we maintain the status of the sweep line in a balanced 
binary search tree and its associated structures while handling events
induced by the point sites and the sides of rectangles parallel to the sweep line.
There are $m$ events induced by point sites and $O(n)$
events induced by rectangles.
After sorting the events in $O(n\log n + m\log m)$ time,
we show that we can handle all events induced by point sites in $O(nm)$ time.
Additionally, we show that each event induced by a rectangle can be handled in $O(m+\log n)$ time.
By the plane sweep, we construct a data structure
consisting of $O(n+m)$ line segments parallel to the sweep line and $O(nm)$ points
in $O(nm+n \log n + m\log m)$ time in total.
Given a query, it uses axis-aligned ray shooting queries on the data structure to
find the farthest site from the query.
The four farthest-point maps are planar subdivisions, and they can be constructed 
during the plane sweep in the same time and space.

With the four farthest-point maps and the data structure for farthest-neighbor queries,
we construct the geodesic $L_1$ farthest-point Voronoi diagram explicitly.
First, we decompose the plane, excluding the holes,
into rectangular faces using vertical line segments, each extended from 
a vertical side of a hole. 
Then, we partition each face in the decomposition
into zones such that the farthest-point Voronoi diagram restricted to
a zone coincides with the corresponding region of a farthest-point map.
This partition is done by using the boundary between two farthest-point maps,
which 
can be computed by traversing the cells in the two maps in which the boundary lies. 
Finally, we glue the corresponding regions along the boundaries of zones,
and then glue all adjacent faces along their boundaries to
obtain the geodesic $L_1$ farthest-point Voronoi diagram.
We show that this can be done in $O(nm+n\log n+m\log m)$ time in total.

For the centers of $m$ points in the presence of $n$ axis-aligned rectangles in the plane,
we can find them from the farthest-point Voronoi diagram 
in time linear to the complexity of the diagram.

\section{Preliminaries}
Let $\rectset$ be a set of $n$ open disjoint rectangles and 
$\sset$ be a set of $m$ point sites lying in the \emph{free space} $\freespace=\realspace^2 - \bigcup_{R\in\rectset}R$. 
We consider the $L_1$ metric. For ease of description, 
we omit $L_1$.
We use $x(p)$ and $y(p)$ to denote the $x$-coordinate and $y$-coordinate of a point $p$, respectively.
For two points $p$ and $q$ in $\freespace$, we use $pq$ to denote the line segment
connecting them.
Whenever we say a path connecting two points in $\freespace$, 
it is a path contained in $\freespace$.
There can be more than one geodesic path connecting two points
$p$ and $q$ avoiding the holes.
We use $\pi(p,q)$ to denote a fixed geodesic path connecting $p$ and $q$,
and use $d(p,q)$ to denote the geodesic distance between $p$ and $q$, which
is the length of $\pi(p,q)$.

We make a general position assumption that no point in $\freespace$ 
is equidistant from four or more distinct sites.
We use $f(p)$ to denote the set of sites of $\sset$ that are 
farthest from a point $p\in\freespace$ under the geodesic distance,
that is, a site $s$ is in $f(p)$ if and only if $d(s,p)\ge d(s',p)$ for all $s'\in\sset$.
If there is only one farthest site, we use $f(p)$ to denote the site.

A horizontal line segment $\ell$ can be represented by
the two $x$-coordinates $x_1(\ell)$ and $x_2(\ell)$ of its endpoints ($x_1(\ell) < x_2(\ell)$)
and the $y$-coordinate $y(\ell)$ of them.
For an axis-aligned rectangle $R$,
let $x_1(R)$ and $x_2(R)$ denote the $x$-coordinates of the left and
right sides of $R$. 

A path is \emph{$x$-monotone} if and only if the intersection of the path with any line
perpendicular to the $x$-axis is connected.
Likewise, a path is \emph{$y$-monotone} if and only if the intersection of the path
with any line perpendicular to the $y$-axis is connected.
A path is \emph{$xy$-monotone} if and only if the path is $x$-monotone and $y$-monotone.
Observe that if a path connecting two points is $xy$-monotone, it is 
a geodesic path connecting the points.

\begin{figure}[t]
	\begin{center}
	  \includegraphics[width=\textwidth]{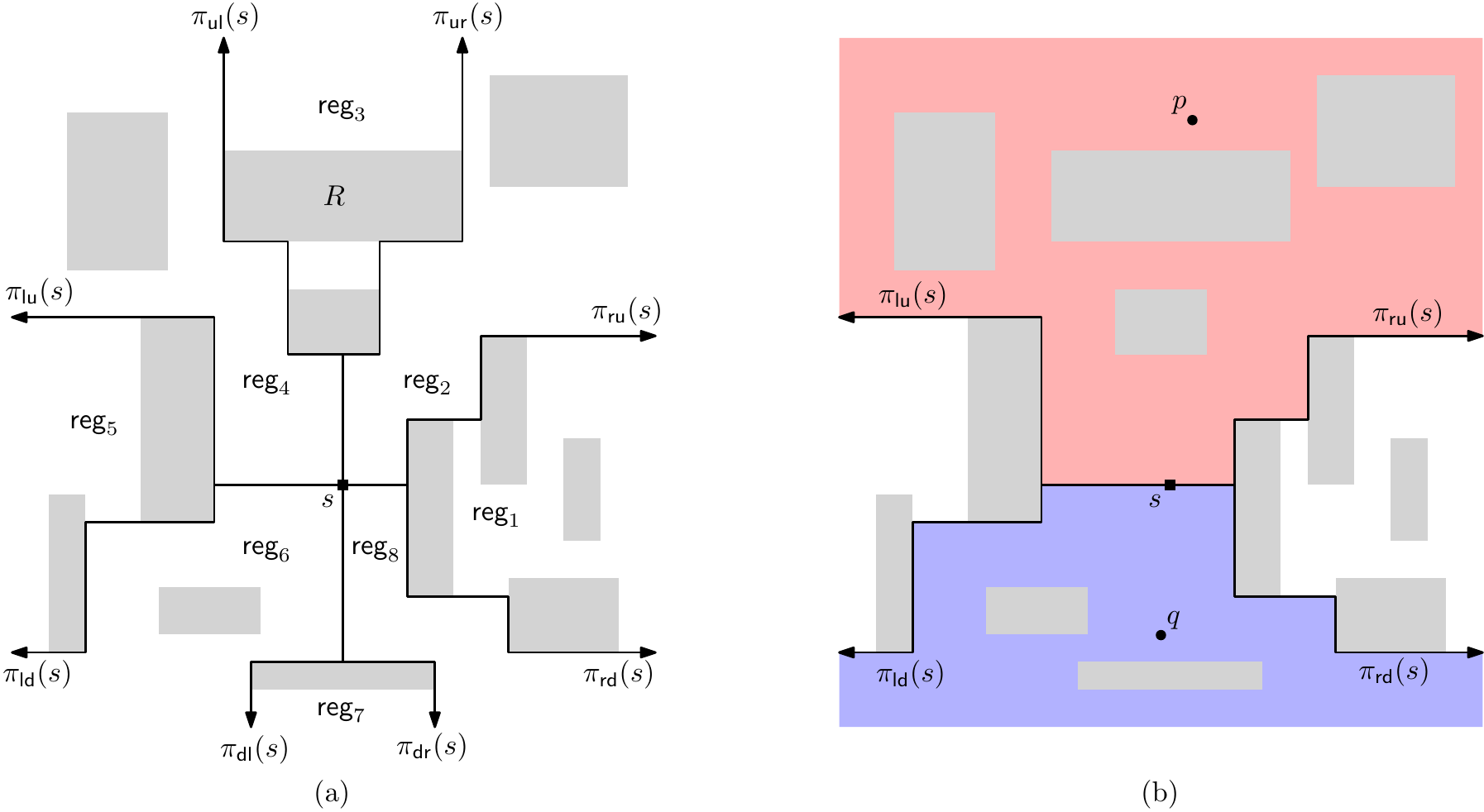}
	  \caption{\small
Gray rectangles are holes.
(a) The eight paths partition $\freespace$ into eight regions $\textsf{reg}_1,\ldots,\textsf{reg}_8$. Region $\textsf{reg}_3$ consists of two regions separated by a rectangle $R$.
(b) Every geodesic path from $s$ to $p$ is $\py$-monotone and $p$ is $\py$-reachable from $s$. Every geodesic path from $s$ to $q$ is $\ny$-monotone and $q$ is $\ny$-reachable from $s$.
  }
	  \label{fig:monotone}
	\end{center}
   \end{figure}

\subsection{Eight Monotone Paths from a Point}
Choi and Yap~\cite{choi1996} gave a way of partitioning 
the plane with rectangular holes
into eight regions using eight $xy$-monotone paths from a point. 
We use their method to partition $\freespace$ as follows.
Consider a horizontal ray emanating from a point $s=p_1\in\freespace$ going rightwards.
The ray stops when it hits a rectangle $R\in\rectset$ at a point $p_1'$. 
Let $p_2$ be the top-left corner of
$R$. We repeat this process by taking a horizontal ray from
$p_2$ going rightwards until it hits a rectangle, and so on. 
Then we obtain an $xy$-monotone path
$\piru(s)=p_1p_1'p_2p_2'\ldots$
from $s$ that alternates going \emph{rightwards} and going \emph{upwards}. 

By choosing two directions, one going either rightwards or leftwards horizontally,
and one going either upwards or downwards vertically, and ordering the chosen directions, 
we define eight rectilinear $xy$-monotone paths with directions:
rightwards-upwards (\textsf{ru}), upwards-rightwards (\textsf{ur}), upwards-leftwards (\textsf{ul}), leftwards-upwards (\textsf{lu}), 
leftwards-downwards (\textsf{ld}), downwards-leftwards (\textsf{dl}), downwards-rightwards (\textsf{dr}),
and rightwards-downwards (\textsf{rd}).
Let $\pi_\delta(s)$ denote one of the eight paths corresponding to
the direction $\delta$ in
$\{\mathsf{ru}, \mathsf{ur}, \mathsf{ul}, \mathsf{lu}, \mathsf{ld}, \mathsf{dl}, \mathsf{dr}, \mathsf{rd}\}$.

Some of the eight paths $\pi_\delta(s)$ may overlap in the beginning 
from $s$ but they do not cross each other. The paths partition 
$\freespace$ into eight regions $\textsf{reg}_1,\ldots,\textsf{reg}_8$ with 
the indices sorted around $s$ in a counterclockwise order
such that $\textsf{reg}_1$ denotes the region lying to the right of $s$, 
below $\piru(s)$ and above $\pird(s)$.
Observe that $\textsf{reg}_i$ is not necessarily connected.
See Figure~\ref{fig:monotone}(a) for an illustration. 

\begin{lemma}[\cite{choi1996,rezende1985}]\label{lem:monotone}
Every geodesic path connecting two points is either $x$-, $y$-, or $xy$-monotone.
For a point $s\in\freespace$, following three statements hold.
\begin{itemize}
\item If $p\in\textsf{reg}_1\cup\textsf{reg}_5$,
every geodesic path from $s$ to $p$ is $x$-monotone but not $y$-monotone.
\item If $p\in\textsf{reg}_3\cup\textsf{reg}_7$, 
every geodesic path from $s$ to $p$ is $y$-monotone but not $x$-monotone.
\item If $p\in\textsf{reg}_2\cup\textsf{reg}_4\cup\textsf{reg}_6\cup\textsf{reg}_8\cup\Pi(s)$, 
every geodesic path from $s$ to $p$ is $xy$-monotone, 
where $\Pi(s)$ is the union of the eight paths $\pi_\delta(s)$.
\end{itemize}
\end{lemma}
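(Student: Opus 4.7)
The plan is to combine a length-based monotonicity argument for $L_1$ geodesics with a region-confinement argument that uses the eight paths $\pi_\delta(s)$ as a skeleton.

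First I would establish the general claim that every geodesic is $x$-, $y$-, or $xy$-monotone. Since the obstacles are axis-aligned rectangles, any $L_1$ geodesic can be chosen to be rectilinear, and the length of a rectilinear path equals the sum of its total horizontal travel and its total vertical travel. If a geodesic $\pi(p,q)$ were neither $x$- nor $y$-monotone, I would isolate an innermost horizontal ``U-turn'' on $\pi$; the rectangle bounded by the U-turn together with its vertical chord contains no obstacle interior (otherwise the U-turn could be refined further), so replacing the U-turn by the chord produces a strictly shorter rectilinear path in $\freespace$, contradicting the optimality of $\pi$.

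Next I would verify that each $\pi_\delta(s)$ is itself a geodesic: being $xy$-monotone by construction, its length equals the $L_1$ coordinate difference between its endpoints, which is a universal lower bound on the distance. Two such paths do not cross, because a crossing would permit an exchange of tails that shortens one of them. This legitimates the region decomposition. For the region-specific claims I would argue by confinement. For $p\in\textsf{reg}_2$, bounded by $\piru(s)$ and $\piur(s)$ (both heading rightwards and upwards), any geodesic $\pi(s,p)$ stays inside $\textsf{reg}_2$: if it crossed the boundary path $\piur(s)$ at a point $c$, splicing $\pi(s,c)$ with the sub-arc of $\piur(s)$ from $c$ to $p$ would yield a strictly shorter $s$-to-$p$ path, contradicting geodesicity. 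Confinement to a region lying up-right of $s$ and down-left of the bounding paths forces $xy$-monotonicity. The cases $\textsf{reg}_4,\textsf{reg}_6,\textsf{reg}_8$ follow by symmetry, and $p\in\Pi(s)$ is immediate since each $\pi_\delta$ is itself $xy$-monotone. For $p\in\textsf{reg}_1$, confinement between $\piru(s)$ and $\pird(s)$ places the geodesic strictly to the right of $s$, so by the first step it must be $x$-monotone; $y$-monotonicity can fail whenever an obstacle inside $\textsf{reg}_1$ forces vertical backtracking. The case $p\in\textsf{reg}_3\cup\textsf{reg}_7$ is the $90^\circ$-dual.

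The most delicate part is the confinement argument: ensuring that a geodesic to $p\in\textsf{reg}_i$ is trapped inside $\textsf{reg}_i$ even when obstacles are tangent to the bounding paths and when multiple geodesics coexist. This requires the non-crossing property of geodesic bundles together with a careful splicing argument, essentially the one given in~\cite{choi1996,rezende1985}. Once the confinement is in place, the desired monotonicity conclusions follow from the coordinate layout of each region and the first step.
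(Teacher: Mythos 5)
The paper does not prove this lemma at all: it is imported verbatim from Choi--Yap and de~Rezende~et~al.\ with a citation, so there is no in-paper proof to compare against. Judged on its own, your outline identifies the right ingredients (geodesic monotonicity, geodesicity of the $\pi_\delta$, non-crossing, confinement), but several of the supporting steps as you state them do not hold up.

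First, the innermost-U-turn shortcut is incorrect: the bounding rectangle of a horizontal U-turn on a geodesic generally \emph{does} contain obstacle interior. For example with $s=(0,0)$, $p=(-1,10)$ and $R=[-5,1]\times[2,8]$, the geodesic goes right to $x=1$ (flush with the right side of $R$), up past $y=8$, then left; the rectangle spanned by this U-turn and its chord contains $R$, so no shortcut exists and the step yields no contradiction. What has to be shown is that an $x$-reversal and a $y$-reversal cannot coexist on a single geodesic, which is a global interaction argument and is exactly the nontrivial content of the cited result; the local ``innermost U-turn is obstacle-free'' claim does not capture it. Second, the non-crossing of the eight paths does not follow from ``exchange of tails'': the paths share the endpoint $s$, so at a crossing point $c$ both prefixes already have length $d(s,c)$, and a tail swap changes neither total length. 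Their non-crossing is instead a consequence of the greedy construction (each path is extremal --- e.g.\ $\piru$ goes as far right as possible before turning, so it stays weakly below and to the right of $\piur$). Third, the splicing step for confinement is ill-formed: there is no ``sub-arc of $\piur(s)$ from $c$ to $p$'' since $p\notin\piur(s)$. Even a corrected splice at two crossing points $c,c'$ of $\piur(s)$ only weakly (not strictly) shortens, because both subpaths between $c$ and $c'$ can be $xy$-monotone and hence tied in length; this does not exclude geodesics that leave the region, and in fact the lemma does not claim confinement, only monotonicity. To make the regional claims go through you would instead combine the first part of the lemma with the location of $p$ relative to $s$ to rule out the wrong monotonicity type, rather than argue set containment of the geodesic. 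You do acknowledge deferring the delicate step to the references, which is fair, but as written the proposal does not establish the statement independently.
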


Based on Lemma~\ref{lem:monotone}, we define a few more terms.
For any point $p$ in $\textsf{reg}_2\cup\textsf{reg}_3\cup\textsf{reg}_4$ 
(and the boundaries of the regions), we say $p$ is \emph{$\py$-reachable} from $s$, and
every geodesic path from $s$ to $p$ is \emph{$\py$-monotone.}
Any point $q\in\textsf{reg}_6\cup\textsf{reg}_7\cup\textsf{reg}_8$ 
(and the boundaries of the regions) is \emph{$\ny$-reachable} from $s$, 
and every geodesic path  from $s$ to $q$ is \emph{$\ny$-monotone}. 
See Figure~\ref{fig:monotone}(b).
Similarly, any point $p\in\textsf{reg}_1\cup\textsf{reg}_2\cup\textsf{reg}_8$ 
(and the boundaries of the regions) is \emph{$\px$-reachable} from $s$, and
every geodesic path  from $s$ to $p$ is \emph{$\px$-monotone}. 
Any point $q\in\textsf{reg}_4\cup\textsf{reg}_5\cup\textsf{reg}_6$ 
(and the boundaries of the regions) is \emph{$\nx$-reachable} from $s$, and
every geodesic path  from $s$ to $q$ is \emph{$\nx$-monotone}.

\section{Farthest-point Maps}\label{sec:fpm}
Based on Lemma~\ref{lem:monotone} and the four directions of monotone paths 
in the previous section, we define four \emph{farthest-point maps}.
A farthest-point map $\rfmpy = \rfmpy(\sset)$ of $\sset$ in $\freespace$ corresponding to 
the positive $y$-direction is a planar subdivision of $\freespace$ into cells.
For a point $p\in\freespace$, a site $s\in\sset$ is a farthest site of $p$ 
in $\rfmpy$ if $d(p,s)\ge d(p,s')$ for every site $s'\in\sset$ 
from which $p$ is $\py$-reachable.
If $p$ is $\py$-reachable from no site in $\sset$, 
$p$ has no farthest site in $\rfmpy$.
Thus, a cell of $\rfmpy$ is defined on $\freespacenoempty$, 
where $C_\emptyset$ denotes the set of points of $\freespace$ 
that are $\py$-reachable from no site in $\sset$.
A site $s$ corresponds to one or more cells in $\rfmpy$ with the property 
that a point $p\in\freespacenoempty$ lies in a cell of $s$ 
if and only if $d(p,s) > d(p,s')$
for every $s'\in\sset\setminus\{s\}$ from which $p$ is $\py$-reachable.


We define $\rfmny$, $\rfmpx$ and $\rfmnx$ analogously 
with respect to their corresponding directions.
Since the four maps have the same structural and combinatorial properties 
with respect to their corresponding directions,
we describe only $\rfmpy$ in the following.
Let $\boundbox$ be an axis-aligned rectangular box such that
$\sset$, $\rectset$, and all vertices of the four farthest-point maps are contained 
in the interior of $\boundbox$.
We focus on $\freespace\cap \boundbox$ only, and use $\freespace$ as $\freespace \cap \boundbox$.

In the following, we analyze the edges of $\rfmpy$ using the bisectors of pairs of sites.
Let $F(s, s')$ denote a set of points of $\freespace$ that are $\py$-reachable from two sites $s$ and $s'$. 
To be specific, $F(s,s')$ is an intersection of two regions, 
one lying above $\pilu(s)$ and $\piru(s)$ and the other lying above $\pilu(s')$ and $\piru(s')$.
Thus, the boundary of $F(s, s')$ coincides with 
the upper envelope of $\pilu(s)$, $\piru(s)$, $\pilu(s')$ and $\piru(s')$.
We use $F(s,s)$ to denote the set of points that are $\py$-reachable 
from a site $s$.

For any two distinct sites $s, s'\in\sset$,
their \emph{bisector} consists of all points 
$x\in\freespace$ satisfying $\{x\mid d(x,s)=d(x,s')\}$.
Observe that the bisector may contain a two-dimensional region.
We use $b(s, s')$ to denote the line segments and the boundary 
of the two-dimensional region in the bisector of $s$ and $s'$. 

\begin{lemma}\label{lem:boundary}
For any two sites $s$ and $s'$, $b(s,s') \cap F(s,s')$ consists of axis-aligned segments.
\end{lemma}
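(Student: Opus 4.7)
The plan is to exploit the $\py$-monotonicity guaranteed on $F(s,s')$. For any point $x \in F(s,s')$, both $s$ and $s'$ reach $x$ along $\py$-monotone paths, so the vertical cost of each geodesic is exactly $y(x)-y(\sigma)$ and I can write
\[
d(x,\sigma) = \bigl(y(x) - y(\sigma)\bigr) + H_\sigma(x),\qquad \sigma\in\{s,s'\},
\]
where $H_\sigma(x)$ denotes the horizontal cost of the shortest $\py$-monotone path from $\sigma$ to $x$. The bisector condition $d(x,s)=d(x,s')$ therefore reduces to
\[
H_s(x) - H_{s'}(x) = y(s) - y(s'),
\]
a relation that no longer involves $y(x)$.

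Next I would decompose $F(\sigma,\sigma)$ for each $\sigma\in\{s,s'\}$ into maximal cells on which $H_\sigma$ is a single linear function of $x(x)$. Each cell is associated with a \emph{last corner} $c$, namely the last obstacle corner visited by the shortest $\py$-monotone path from $\sigma$ before proceeding along an unobstructed $xy$-monotone segment to $x$. On such a cell,
\[
H_\sigma(x) = \bigl(d(\sigma,c) - y(c) + y(\sigma)\bigr) \pm \bigl(x(x) - x(c)\bigr),
\]
where the sign depends on whether $x$ lies to the right or to the left of $c$. Thus $H_\sigma$ has slope $\pm 1$ in $x(x)$ and is independent of $y(x)$. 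Cell boundaries arise either as horizontal or vertical lines through a corner $c$ (delimiting the quadrant from which $c$ reaches $x$ $xy$-monotonically) or as \emph{handover} boundaries separating cells with different last corners; since the $y(x)$ contributions cancel in each tie-breaking equation, every handover boundary is itself a vertical segment. Hence every cell is a rectilinear region bounded by axis-aligned segments.

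Overlaying the two decompositions produces a common rectilinear refinement of $F(s,s')$ on each cell of which $H_s(x) = \alpha_s\, x(x) + K_s$ and $H_{s'}(x) = \alpha_{s'}\, x(x) + K_{s'}$ with $\alpha_\sigma\in\{+1,-1\}$. Substituting into the reduced bisector equation gives $(\alpha_s - \alpha_{s'})\, x(x) = y(s) - y(s') + K_{s'} - K_s$. If $\alpha_s \neq \alpha_{s'}$ this cuts out a single vertical line clipped to the cell, contributing a vertical segment to $b(s,s')$. If $\alpha_s = \alpha_{s'}$ the equation either holds identically (the entire cell lies in the bisector and contributes its axis-aligned boundary to $b(s,s')$) or has no solution. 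In every case the contribution to $b(s,s')\cap F(s,s')$ is a union of axis-aligned segments, as claimed.

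The step I expect to be the main obstacle is the cell decomposition of $H_\sigma$, and in particular verifying that every handover boundary between competing last corners is axis-aligned rather than diagonal. The crucial ingredient is the cancellation of $y(x)$ in the tie-breaking equation, itself a direct consequence of considering only $\py$-monotone paths inside $F(s,s')$; without this cancellation the $L_1$ bisector would develop diagonal pieces, as it does outside $F(s,s')$.
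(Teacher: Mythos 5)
Your proposal is correct, but it takes a genuinely different route from the paper. The paper's proof is much shorter: it quotes the known structure theorem for $L_1$ bisectors (they consist of axis-aligned segments and segments of slope $\pm 1$, citing Mitchell 1992), and then rules out the $\pm 1$ pieces inside $F(s,s')$ by a local contradiction — at a point $q$ on a hypothetical slope-$\pm 1$ piece, the terminal $xy$-monotone portions of $\pi(s,q)$ and $\pi(s',q)$ would have to straddle $q$ in both coordinates, which would make one of the two geodesics fail to be $y$-monotone, contradicting that $q\in F(s,s')$. You instead build the bisector from scratch: decompose $d(x,\sigma)=(y(x)-y(\sigma))+H_\sigma(x)$ using $\py$-monotonicity, observe that $y(x)$ cancels in the bisector equation, and show that $H_\sigma$ is piecewise linear in $x(x)$ alone over a rectilinear shortest-path-map decomposition indexed by last corner. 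Your approach is self-contained (no reliance on the bisector structure theorem), constructive (it actually exhibits each bisector piece), and makes the role of $\py$-monotonicity unusually transparent — it is exactly what makes the vertical cost separate off and cancel. The paper's approach buys brevity. One small sharpening worth making explicit: the boundary where a corner $c$ stops being a valid last corner is not just the horizontal and vertical lines through $c$, but the rectilinear staircase paths $\piru(c)$, $\piur(c)$, $\pilu(c)$, $\piul(c)$ emanating from $c$ (and the case where $\sigma$ itself plays the role of the last corner should be included); these are still axis-aligned, so your conclusion is unaffected.
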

\begin{proof}
Consider two sites $s$ and $s'$ in the plane with no holes. 
Then $b(s,s')$ contained in $\boundbox$ is 
a polygonal chain consisting of two parallel and axis-aligned segments, 
and one segment of slope $1$ or $-1$ lying in between them.
The segment of slope $-1$ appears in region $[x(s),x(s')]\times[y(s),y(s')]$
if $x(s)\le x(s')$ and $y(s)\le y(s')$, and the segment of slope $1$ appears 
in region $[x(s),x(s')]\times[y(s),y(s')]$ if $x(s')\le x(s)$ and $y(s)\le y(s')$.

Now consider the bisector $b(s,s')$ of two sites $s$ and $s'$ in the freespace $\freespace$. 
Due to the rectangle holes of $\freespace$, 
the bisector may consist of two or more pieces.
It, however, still consists of axis-aligned segments,
and segments of slope $\pm 1$ under the $L_1$ metric~\cite{mitchell1992}.

We now focus on $b(s,s')$ restricted to $F(s,s')$ and
show that no segment of slope $\pm 1$ appears in $b(s,s') \cap F(s,s')$.
Assume to the contrary that $b(s,s') \cap F(s,s')$ has a segment 
$uv$ of slope $1$ or $-1$. 
Let $q$ be any point on $uv$, and
$q_1$ be the farthest point from $q$ on $\pi(s,q)$ such that $\pi(q_1,q)$ is $xy$-monotone.
Likewise,
let $q_2$ be the farthest point from $q$ on $\pi(s',q)$ such that $\pi(q_2,q)$ is 
$xy$-monotone.
Clearly, $d(s,q_1)+d(q_1,q)=d(s',q_2)+d(q_2,q)$.
Since $uv$ has slope $1$ or $-1$,
$\min\{x(q_1),x(q_2)\}\leq x(q) \leq \max\{x(q_1),x(q_2)\}$ and
$\min\{y(q_1),y(q_2)\}\leq y(q) \leq \max\{y(q_1),y(q_2)\}$.
Then $q$ is $\py$-reachable from one of $q_1$ and $q_2$,
and $\ny$-reachable from the other,
implying that $\pi(s,q)$ or $\pi(s',q)$ is not $y$-monotone, a contradiction.
Thus, no segment of slope $1$ appears in $b(s,s') \cap F(s,s')$.
Analogously, we can show that $b(s,s') \cap F(s,s')$ 
has no segment with slope $-1$.
\end{proof}

\begin{figure}[t]
	\begin{center}
	  \includegraphics[width=\textwidth]{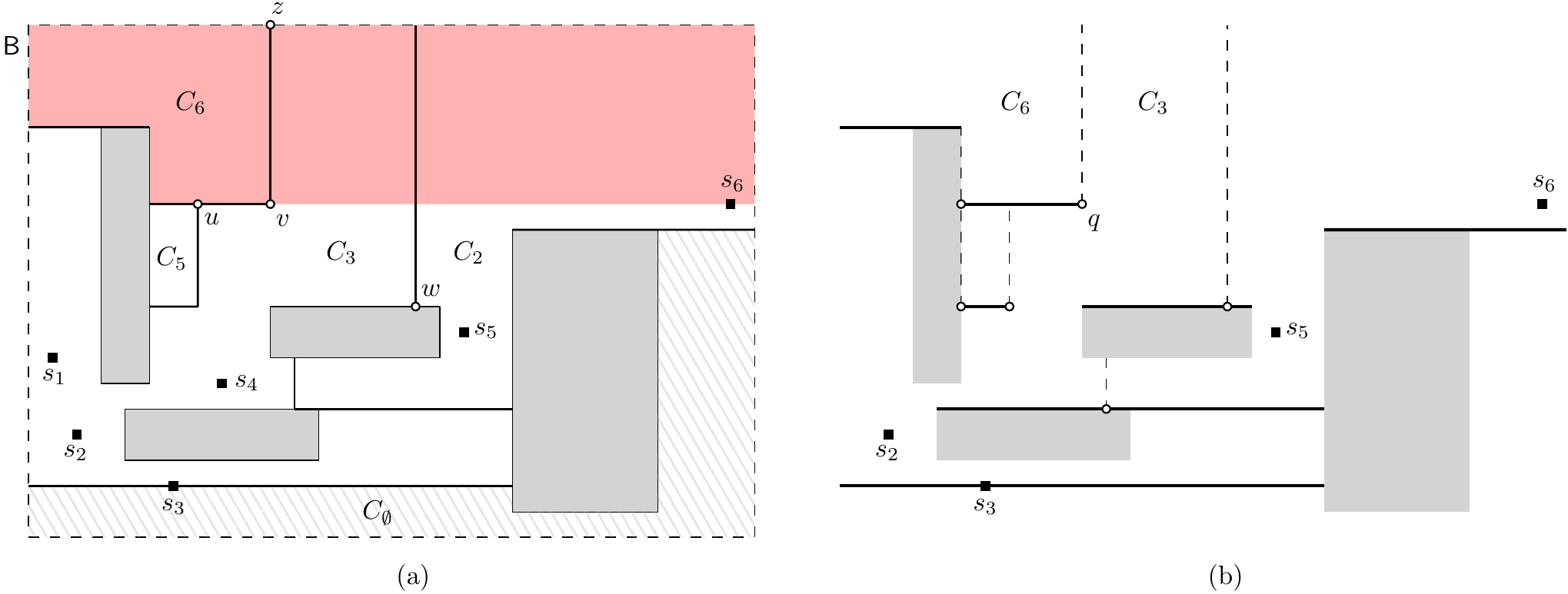}
	  \caption{\small
(a) $\rfmpy$ for $\sset=\{s_1,\ldots,s_6\}$
restricted to a box $\boundbox$ with four rectangular holes (gray).
$s_i$ has a corresponding cell $C_i$ for $i=2,3,5,6$ while $s_1$ and $s_4$ 
have no cell.
A vertical edge $vz$ is from $b(s_3,s_6)$ in the (red) region $F(s_3,s_6)$.
A horizontal edge $uv$ is not part of $b(s_3,s_6)$ but it is part of a $b$-edge 
as no point lying below $uv$ is $\py$-reachable from $s_6$.
(b) Illustration of $\dspy$ corresponding to $\rfmpy$.
At the boundary point $q$, $d(q,s_3) = d(q,s_6)$.
	  }
	  \label{fig:rfmpy}
	\end{center}
   \end{figure}

Let $f_\delta(p)$ denote the set of farthest sites from
a point $p\in\freespace$ among the sites from which $p$ is $\delta$-reachable for $\delta \in \{\py, \ny, \px, \nx\}$.
For each horizontal segment of $\pilu(s)\cup\piru(s)$,
we call the portion $h$ of the segment such that $f_\py(p)=\{s\}$ for any point $p\in h$,
a \emph{$b$-edge}.
Observe that no point $p'$ with $x_1(h)\le x(p') \le x_2(h)$ and $y(p') = y(h) - \eps$ for any $\eps > 0$ 
is $\py$-reachable from $s$.
Thus, a $b$-edge is also an edge of $\rfmpy$.
Since every edge of $\rfmpy$ is part of a bisector of two sites in $\sset$
or a $b$-edge, it is either horizontal or vertical.
See Figure~\ref{fig:rfmpy}(a).

\begin{corollary}\label{co:boundary}
Every edge of $\rfmpy$ is an axis-aligned line segment.
\end{corollary}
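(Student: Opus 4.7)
The plan is to classify every edge of $\rfmpy$ into two types and handle each using material already established. An edge $e$ of $\rfmpy$ either separates two cells belonging to distinct sites $s$ and $s'$, or it separates a cell of some site $s$ from the region $C_\emptyset$ of points that are $\py$-reachable from no site. I would argue these are the only possibilities by appealing to the definition of $\rfmpy$: an edge is where the identity of the farthest $\py$-reachable site changes, which happens either at a tie between two distinct sites, or at the boundary past which no site is $\py$-reachable.

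For edges of the first type, I would observe that if $e$ bounds cells of $s$ and $s'$, then every point of $e$ lies in $F(s,s')$ (both sites are $\py$-reachable from it) and satisfies $d(\cdot,s)=d(\cdot,s')$, so $e \subseteq b(s,s') \cap F(s,s')$. Lemma~\ref{lem:boundary} then immediately gives that $e$ is axis-aligned. For edges of the second type, I would note that such an edge lies on the boundary of $F(s,s)$, i.e., on the upper envelope of $\pilu(s)\cup\piru(s)$, and coincides locally with a $b$-edge as defined in the paragraph preceding the corollary. By construction, $b$-edges are horizontal portions of the rectilinear paths $\pilu(s)$ and $\piru(s)$, hence axis-aligned.

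Combining the two cases yields the corollary. I would take care to justify why no other sort of edge arises: in particular, an edge cannot lie in the interior of a single cell, and any curve along which $f_\py$ changes must be of one of the two types above. This case analysis, rather than any geometric calculation, is the main (mild) obstacle, since Lemma~\ref{lem:boundary} has already done the substantive geometric work of ruling out slope-$\pm 1$ segments inside $F(s,s')$. The remaining step is essentially bookkeeping: verifying that the edge set of $\rfmpy$ is exhausted by bisector pieces inside $F(s,s')$ together with $b$-edges on the boundary of $\freespacenoempty$.
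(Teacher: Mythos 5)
Your case split is not quite the same as the paper's, and the difference matters. The paper classifies an edge of $\rfmpy$ as either (i) a piece of a bisector $b(s,s')$ restricted to $F(s,s')$, or (ii) a $b$-edge, i.e., a horizontal portion of $\pilu(s)\cup\piru(s)$ on which $s$ is the unique farthest $\py$-reachable site. Crucially, a $b$-edge need not lie on the boundary of $C_\emptyset$: it can separate a cell of $s$ from a cell of a \emph{different} site $s'$, with no tie at all. This is exactly the edge $uv$ shown in Figure~\ref{fig:rfmpy}(a): it separates the cells of $s_3$ and $s_6$, but it is not part of $b(s_3,s_6)$ — just below $uv$, $s_6$ ceases to be $\py$-reachable, so the farthest $\py$-reachable site jumps from $s_6$ to $s_3$ without the distances ever being equal.

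Your dichotomy (``separates cells of two sites'' versus ``separates a cell from $C_\emptyset$'') therefore files that edge under your Type~1, where you assert that every point of $e$ lies in $F(s,s')$ and satisfies $d(\cdot,s)=d(\cdot,s')$, hence $e \subseteq b(s,s')\cap F(s,s')$. That assertion is false for $b$-edges interior to $\freespacenoempty$. The conclusion (axis-alignment) happens to survive because such edges are horizontal portions of a rectilinear path, but the argument as written does not establish it. To repair the proof you should split instead on whether both neighboring sites are $\py$-reachable on both sides of $e$ near a point $p\in e$: if so, continuity forces a tie and Lemma~\ref{lem:boundary} applies; if not, $e$ is a $b$-edge of whichever site loses $\py$-reachability across $e$, hence horizontal by construction. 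That is the dichotomy the paper actually uses in the sentence ``every edge of $\rfmpy$ is part of a bisector of two sites in $\sset$ or a $b$-edge.''
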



For sites contained in a simple polygon, Aronov~et~al.~\cite{aronov1993} gave 
a lemma, called \emph{Ordering Lemma},
that the order of sites along their convex hull is the same as the order of their Voronoi 
cells along the boundary of a simple polygon. 
We give a lemma on the order of sites in the presence of rectangular obstacles. 
We use it in analyzing the maps and Voronoi diagrams.

\begin{lemma}\label{lem:reverseorder}
Let $pq$ be a horizontal segment contained in $\freespacenoempty$ with $x(p) < x(q)$. 
For any two sites $f_p \in f(p)$ and $f_q \in f(q)$
such that $p$ and $q$ are $\py$-reachable from both $f_p$ and $f_q$, 
if $f_p \notin f(q)$ or $f_q \notin f(p)$, $x(f_p) > x(f_q)$.
\end{lemma}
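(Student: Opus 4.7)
The plan is to argue by contradiction: suppose $x(f_p) \leq x(f_q)$, and derive a contradiction with the hypothesis that $f_p \notin f(q)$ or $f_q \notin f(p)$. Since $f_p \in f(p)$ and $f_q \in f(q)$, we already have $d(p,f_p) \geq d(p,f_q)$ and $d(q,f_q) \geq d(q,f_p)$; the disjunctive hypothesis promotes at least one of these to a strict inequality, giving
\[
d(p,f_p) + d(q,f_q) > d(p,f_q) + d(q,f_p). \qquad (\star)
\]
It therefore suffices to show that, under $x(f_p) \leq x(f_q)$ together with the $\py$-reachability hypothesis, the reverse (non-strict) inequality $d(p,f_p) + d(q,f_q) \leq d(p,f_q) + d(q,f_p)$ always holds, contradicting $(\star)$.

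For this, we plan to use a path-crossing argument on the geodesic paths. Consider $\alpha = \pi(f_q, p)$ and $\beta = \pi(f_p, q)$; by the $\py$-reachability hypothesis, both are $\py$-monotone, so both travel $y$-monotonically upward and terminate on the horizontal line through $p$ and $q$. At the top, $\beta$ ends at $x(q)$ while $\alpha$ ends at $x(p) < x(q)$, so $\beta$ is strictly to the right of $\alpha$. At the bottom, $\beta$ starts at $x(f_p)$ and $\alpha$ at $x(f_q)$, with $x(f_p) \leq x(f_q)$, so $\beta$ is at or to the left of $\alpha$. The two $y$-monotone curves thus swap their $x$-order as we proceed upward, so by an intermediate-value argument they must share a common point $z$. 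Since $z$ lies on the geodesics $\alpha$ and $\beta$, the triangle inequality at $z$ then yields
\[
d(p,f_q) + d(q,f_p) = [d(f_q,z) + d(z,p)] + [d(f_p,z) + d(z,q)] \geq d(f_p,p) + d(f_q,q),
\]
contradicting $(\star)$.

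The main obstacle will be making the ``swap forces a crossing'' step rigorous when $\alpha$ and $\beta$ start at different heights, for example when $y(f_p) < y(f_q)$: the two paths are then defined on different $y$-ranges, and at $y = y(f_q)$ the path $\beta$ might already have moved to the right of $x(f_q)$, so no direct crossing is forced on the shared range $[y(f_q), y_0]$. We plan to resolve this by exploiting the non-uniqueness of $L_1$ geodesics and selecting leftmost/rightmost $\py$-monotone representatives for $\alpha$ and $\beta$ so that a common point is guaranteed. As a fallback, we plan to bypass path crossing entirely, using the decomposition $d(s, r) = (y(r) - y(s)) + H(s, r)$ valid for $\py$-reachable $r$, where $H(s, r)$ denotes the horizontal length of the shortest $y$-monotone path from $s$ to $r$; the vertical terms then cancel on both sides of the target inequality, reducing it to the Monge-type claim $H(f_p, p) + H(f_q, q) \leq H(f_p, q) + H(f_q, p)$ for horizontal lengths under $x(f_p) \leq x(f_q)$ and $x(p) < x(q)$, which can be verified by a local analysis along $pq$.
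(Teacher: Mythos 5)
Your setup matches the paper's proof up to a point: the strict inequality $(\star)$ is exactly how the paper uses the hypothesis $f_p\notin f(q)$ or $f_q\notin f(p)$, and your path-crossing exchange argument is precisely the paper's case in which $\pi(p,f_q)$ and $\pi(q,f_p)$ intersect. The gap is the case you yourself flag, and neither of your proposed remedies closes it. The two $\py$-monotone geodesics genuinely need not meet, for any choice of representatives, so picking leftmost/rightmost geodesics does not save the intermediate-value argument; and the Monge-type inequality of your fallback is simply false without the farthestness hypothesis. A concrete configuration showing both failures: let $f_p=(0,0)$, $f_q=(10,5)$, $p=(4,10)$, $q=(5,10)$, with obstacles $R_1=(-M,20)\times(1,2)$ for large $M$ and $R_2=(4.5,11)\times(8,9)$. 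All reachability hypotheses hold ($\pilu,\piru$ of both sites are horizontal rays), and $x(f_p)<x(f_q)$, $x(p)<x(q)$. One checks $d(f_p,p)=46$, $d(f_p,q)=45$, $d(f_q,p)=11$, $d(f_q,q)=11$, so
\[
d(p,f_p)+d(q,f_q)=57>56=d(p,f_q)+d(q,f_p),
\]
i.e.\ the quadrangle (equivalently your horizontal-length Monge) inequality fails; moreover every geodesic from $f_p$ to $q$ stays at $x\ge 11$ for all heights in $(8,9)$ while every geodesic from $f_q$ to $p$ has $x\le 4.5$ there (and $x\le 10<11$ below, $x\le 4.5<5$ above), so no two representatives share a point. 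Thus the inequality you want to prove from geometry and reachability alone is not a theorem; the farthestness hypothesis must enter the non-crossing case in a different way.

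What the paper does in that case is the missing idea: if no geodesics $\pi(p,f_q)$ and $\pi(q,f_p)$ cross, then $y(f_p)\neq y(f_q)$, and assuming $y(f_p)<y(f_q)$ the staircase $\piru(f_q)$ must intersect $\pi(q,f_p)$ at some point $r$ with $x(r)\ge x(f_q)>x(f_p)$ and $y(r)\ge y(f_q)>y(f_p)$. Since $r$ is joined to $f_q$ by an $xy$-monotone path, $d(r,f_q)$ equals the $L_1$ distance and is strictly smaller than $d(r,f_p)$, whence $d(q,f_q)\le d(q,r)+d(r,f_q)<d(q,r)+d(r,f_p)=d(q,f_p)$, contradicting $f_q\in f(q)$ directly (the symmetric subcase $y(f_p)>y(f_q)$ uses $\pilu(f_p)$ and contradicts $f_p\in f(p)$). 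In the example above this is visible: $d(q,f_p)=45>d(q,f_q)=11$, so $f_q$ could never be a farthest site of $q$ there. You would need to add this second argument (or something equivalent that uses farthestness, not just reachability) to complete the proof; as written, the proposal cannot be repaired along either of the routes you sketch.
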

\begin{proof}
Ben-Moshe et al.~\cite{moshe2001} showed that 
$x(f_p) \neq x(f_q)$.

Assume to contrary that $x(f_p) < x(f_q)$. Consider two cases that
there are (1) two geodesic paths, one from $p$ to $f_q$ and one from $q$ to $f_p$, 
intersecting each other at a point, say $r$ (Figure~\ref{fig:lemreverse}(a)),
or (2) no two such geodesic paths intersecting each other (Figure~\ref{fig:lemreverse}(b)).

For case (1), we have
\begin{equation*}
d(p,f_q) = d(p,r)+d(r,f_q)\quad \text{and}\quad d(q,f_p) = d(q,r)+d(r,f_p).
\end{equation*}
We also observe that 
\begin{equation*}
d(p,f_p) \leq d(p,r)+d(r,f_p)\quad \text{and} \quad d(q,f_q) \leq d(q,r)+d(r,f_q).
\end{equation*}
Adding the two inequalities above, we obtain 
\begin{eqnarray*}
d(p,f_p) + d(q,f_q) &\leq& d(p,r)+d(r,f_p)+d(q,r)+d(r,f_q)\\
 &=& d(p,f_q)+d(q,f_p).
\end{eqnarray*}
However, since $f_p \notin f(q)$ or $f_q \notin f(p)$,
we have $d(p,f_p) + d(q,f_q) > d(p,f_q)+d(q,f_p)$, a contradiction.

Now consider case (2) that there are no two geodesic paths, 
one from $p$ to $f_q$ and one from $q$ to $f_p$,
intersecting each other (Figure~\ref{fig:lemreverse}(b)).
Since $pq$ is horizontal with $x(p)<x(q)$, $x(f_p) < x(f_q)$ by assumption, and 
every geodesic path from $p$ to $f_q$ and every geodesic path from 
$q$ to $f_p$ are $\py$-monotone,
we have $y(f_p) \neq y(f_q)$.
Without loss of generality, assume $y(f_p) < y(f_q)$.
Then $\piru(f_q)$ intersects $\pi(q,f_p)$ at a point, say $r$ (Figure~\ref{fig:lemreverse}(c)). 
Since $x(f_p) < x(f_q) \leq x(r)$ and $y(f_p) < y(f_q) \leq y(r)$, 
we have $d(r,f_q)<d(r,f_p)$, and thus 
\begin{equation*}
d(q,r) + d(r,f_q) < d(q,r) + d(r,f_p) = d(q,f_p).
\end{equation*}
Since $d(q,f_q) \leq d(q,r) + d(r,f_q)$, we have $d(q,f_q)<d(q,f_p)$, a contradiction.
\begin{figure}[t]
  \begin{center}
    \includegraphics[width=\textwidth]{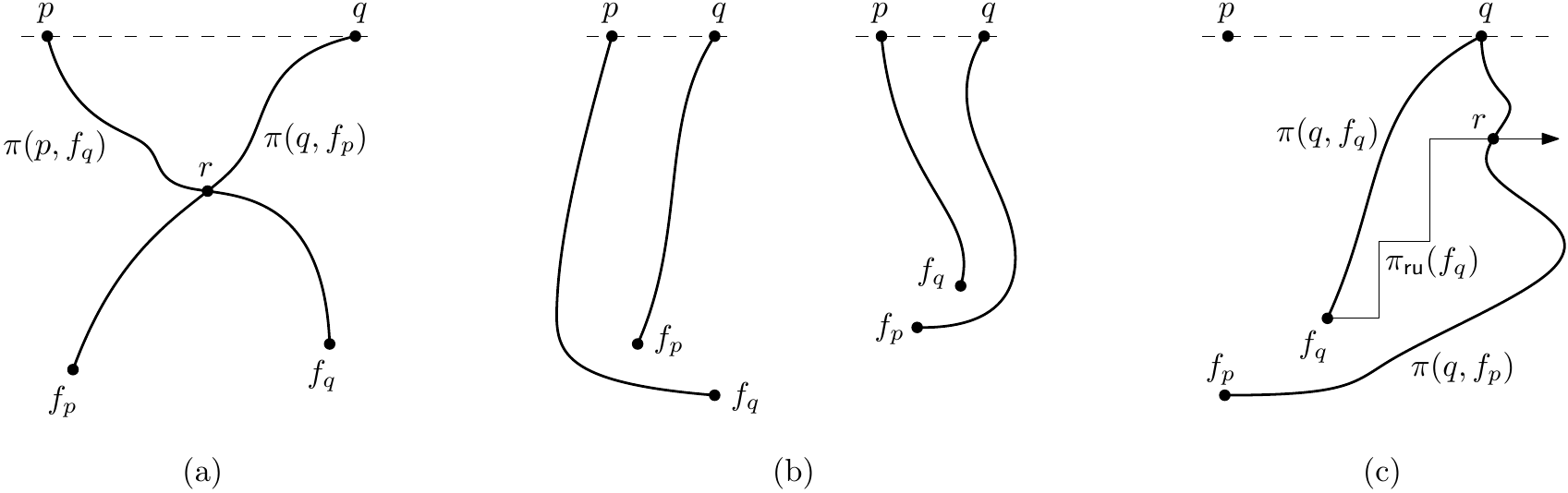}
    \caption{\small
Two cases (a) and (b) for two geodesic path $\pi(p,f_q)$ and $\pi(q,f_p)$ when $x(f_p) < x(f_q)$.
(a) $\pi(p,f_q)$ and $\pi(q,f_p)$ intersect each other at $r$.
(b) $\pi(p,f_q)$ and $\pi(q,f_p)$ do not intersect each other.
(c) If $x(f_p)<x(f_q)$ and $y(f_p)\le y(f_q)$, $\piru(f_q)$ intersects $\pi(q,f_p)$ at $r$.
    }
    \label{fig:lemreverse}
  \end{center}
\end{figure}
\end{proof}

Since there are at most $m$ sites, we obtain the following corollary from Lemma~\ref{lem:reverseorder}.

\begin{corollary}\label{co:reverse}
Any horizontal line segment contained in $\freespace$ intersects at most $m$ cells in $\rfmpy$.
\end{corollary}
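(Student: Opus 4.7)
The plan is to show that as one traverses $\ell$ from left to right, the sites of the successive cells of $\rfmpy$ encountered have strictly decreasing $x$-coordinates. Since in any chain of farthest sites along a horizontal segment the $x$-coordinates are pairwise distinct (as in the first sentence of Lemma~\ref{lem:reverseorder}'s proof, attributed to Ben-Moshe et al.), a strictly monotone chain has length at most $m$, which gives the bound.

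Concretely, let $C_1,\ldots,C_k$ be the distinct cells of $\rfmpy$ that $\ell$ meets, listed from left to right, and let $s_i$ be the site of $C_i$. For each pair of consecutive cells $C_i,C_{i+1}$, I would pick points $p\in C_i\cap \ell$ and $q\in C_{i+1}\cap\ell$ with $x(p)<x(q)$ very close to the common boundary. By Corollary~\ref{co:boundary} together with Lemma~\ref{lem:boundary}, any part of the boundary between $C_i$ and $C_{i+1}$ that lies in the interior of $\freespacenoempty$ is a vertical portion of $b(s_i,s_{i+1})$ lying in $F(s_i,s_{i+1})$; hence $p$ and $q$ can be chosen so that both are $\py$-reachable from both $s_i$ and $s_{i+1}$. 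Since $p$ is in the open cell of $s_i$ and $q$ in that of $s_{i+1}$, we have the strict inequalities $d(p,s_i)>d(p,s_{i+1})$ and $d(q,s_{i+1})>d(q,s_i)$. Then the same argument as in the proof of Lemma~\ref{lem:reverseorder} (both Case~1 and Case~2 only use the triangle inequality, $\py$-monotonicity of the relevant geodesic paths, and these two strict inequalities) yields $x(s_i)>x(s_{i+1})$. Iterating across all consecutive pairs gives the monotone chain, and concluding $k\le m$ is immediate.

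The step I expect to be the main obstacle is the case where $\ell$ passes through $C_\emptyset$ between two successive cells $C_i$ and $C_{i+1}$: then the transition between them along $\ell$ is not across a single vertical edge of $F(s_i,s_{i+1})$, and there may be no pair $(p,q)$ on $\ell$ near the transition that is simultaneously $\py$-reachable from both $s_i$ and $s_{i+1}$, so Lemma~\ref{lem:reverseorder} does not apply directly. To handle this, I would either (i) replace $p$ and $q$ by slightly perturbed points on a parallel horizontal segment chosen to lie inside $\freespacenoempty\cap F(s_i,s_{i+1})$ and apply Lemma~\ref{lem:reverseorder} there, or (ii) chain the monotonicity through an intermediate witness cell whose site is $\py$-reachable from both sides, so that strict decrease is transported across the $C_\emptyset$ gap. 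Either way, the end result is the same strictly decreasing sequence of $x$-coordinates, which gives $k\le m$.
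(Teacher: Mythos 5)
Your plan matches the paper's own (one-sentence) derivation of the corollary from Lemma~\ref{lem:reverseorder}: iterate the strict $x$-monotonicity of the farthest sites along $\ell$. Your analysis of consecutive cells sharing a vertical bisector edge inside $F(s_i,s_{i+1})$ is correct and is the intended argument.

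However, the handling of the $C_\emptyset$ gap, which you rightly flag as the main obstacle, is not sound: neither proposed fix recovers the monotone chain, and in fact if such a gap could occur the $x$-coordinates of the sites would \emph{increase} across it, not decrease. For each site $s$, the set $I_s$ of $x$ with $(x,y(\ell))\in F(s,s)$ is a single interval containing $x(s)$, because the boundary of $F(s,s)$ is a staircase that is non-increasing for $x<x(s)$ (along $\pilu(s)$) and non-decreasing for $x>x(s)$ (along $\piru(s)$). So if $\ell$ left $I_{s_i}$ on its right and then entered $I_{s_{i+1}}$, one would have $x(s_i)<x(s_{i+1})$, and perturbing to a parallel segment or chaining through witness cells cannot undo this. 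The correct resolution is that a horizontal segment $\ell\subset\freespace$ cannot partially intersect any $F(s,s)$: the right endpoint $r$ of $I_s$, if finite, is the $x$-coordinate of a vertical rise of $\piru(s)$, which lies on the left side of a rectangle $R\in\rectset$ whose open $y$-range contains $y(\ell)$; since $\ell$ cannot cross $x=x_1(R)$ we get $x_2(\ell)\le r$, and symmetrically $x_1(\ell)$ is at least the left endpoint of $I_s$. Hence $\ell\cap F(s,s)\ne\emptyset$ implies $\ell\subset F(s,s)$, so $\ell$ either lies entirely in $C_\emptyset$ (and meets no cell) or every point of $\ell$ is $\py$-reachable from the same set of sites. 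In the latter case the reachability hypothesis of Lemma~\ref{lem:reverseorder} holds automatically for every pair on $\ell$, there is no $C_\emptyset$ gap, and your iteration gives at most $m$ cells.
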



Using Corollary~\ref{co:boundary} and~\ref{co:reverse},
we analyze the complexity of $\rfmpy$ as follows.
Note that each lower endpoint of a vertical edge of $\rfmpy$ appears
on a horizontal line segment passing through a site or the top side of a rectangle.
By Corollary~\ref{co:reverse}, the maximal horizontal segment through
the top side of a rectangle in $\rectset$ and contained in $\freespace$
intersects $O(m)$ vertical edges of $\rfmpy$.
Moreover, 
the maximal horizontal line segment through a site $s$
and contained in $\freespace$ 
intersects $O(1)$ lower endpoints of vertical edges on the boundary 
of the cell of $s$.
Since there are $n$ rectangles in $\rectset$ and $m$ sites in $\sset$, 
$\rfmpy$ has $O(nm+m) = O(nm)$ vertical edges.
Every horizontal edge of $\rfmpy$ is a segment of a bisector or a $b$-edge,
and it is incident to a side of a rectangle or another vertical edge.
Since there are $O(n)$ rectangle sides, and 
$O(1)$ horizontal edges of $\rfmpy$ that are incident to a vertical edge,
$\rfmpy$ has $O(n + nm) = O(nm)$ horizontal edges.
Thus, $\rfmpy$ has complexity $O(nm)$.

Now we show that every farthest site $s\in f(p)$ of a point $p$ in $\freespace$ 
is one of the farthest sites of $p$ in the four farthest-point maps.
By the definition of the farthest-point maps, $p$ 
is contained in a cell of 
$\rfmpy$, $\rfmny$, $\rfmpx$ or $\rfmnx$.
Since every geodesic path connecting two points is either $\py$-, $\ny$-, $\px$-, or $\nx$-monotone
by Lemma~\ref{lem:monotone}, 
$s\in f(p)$ is one of the farthest sites of $p$ in the four farthest-point maps.
If $p$ is contained in cells of two or more 
maps,
we compare their distances to the farthest sites defining the cells and
take the ones with the largest distance as the farthest sites of $p$.
Thus, once the four farthest-point maps are constructed,
the farthest sites of a query point can be computed from
the map.

\section{Data Structure for Farthest-neighbor Queries}\label{sec:algorithms}
We present an algorithm that constructs a data structure for farthest
site queries. 
We denote $m$ point sites of $\sset$ by $s_1,\ldots,s_m$ 
such that $x(s_1) \leq \cdots \leq x(s_m)$,
and $n$ rectangular obstacles of $\rectset$ by $R_1,\ldots,R_n$.
The data structure consists of four parts, each for one axis direction.
Since the four parts can be constructed in the same way with respect to their directions,
we focus on the part corresponding to the positive $y$-direction,
and thus the structure corresponds to $\rfmpy$. We use $\dspy$ to denote
the query data structure. 

By Corollary~\ref{co:boundary}, we can find the farthest site of a query point 
using a vertical ray shooting query to the horizontal edges 
of $\rfmpy$ and a binary search on the lower endpoints of vertical edges of $\rfmpy$
lying on the horizontal edges of $\rfmpy$.
Thus, we construct $\dspy$ such that it consists of the horizontal edges
of $\rfmpy$ and the endpoints of vertical edges of $\rfmpy$ 
lying on the horizontal edges of $\rfmpy$.

A point $q$ lying on a horizontal segment $h$ of $\dspy$ 
is the lower endpoint of a vertical edge of $\rfmpy$ if and only if
there are two points $q_1=(x(q) - \eps,y(q))$ and $q_2=(x(q) + \eps,y(q))$
for sufficiently small $\eps > 0$ satisfying 
$f_\py(q_1) \cup f_\py(q_2) = f_\py(q)$ and $f_\py(q_1) \neq f_\py(q_2)$.
We call each lower endpoint of vertical edges lying on $h$ 
a \emph{boundary point} on $h$. 
See Figure~\ref{fig:rfmpy}(b).

\begin{figure}[t]
  \begin{center}
    \includegraphics[width=.85\textwidth]{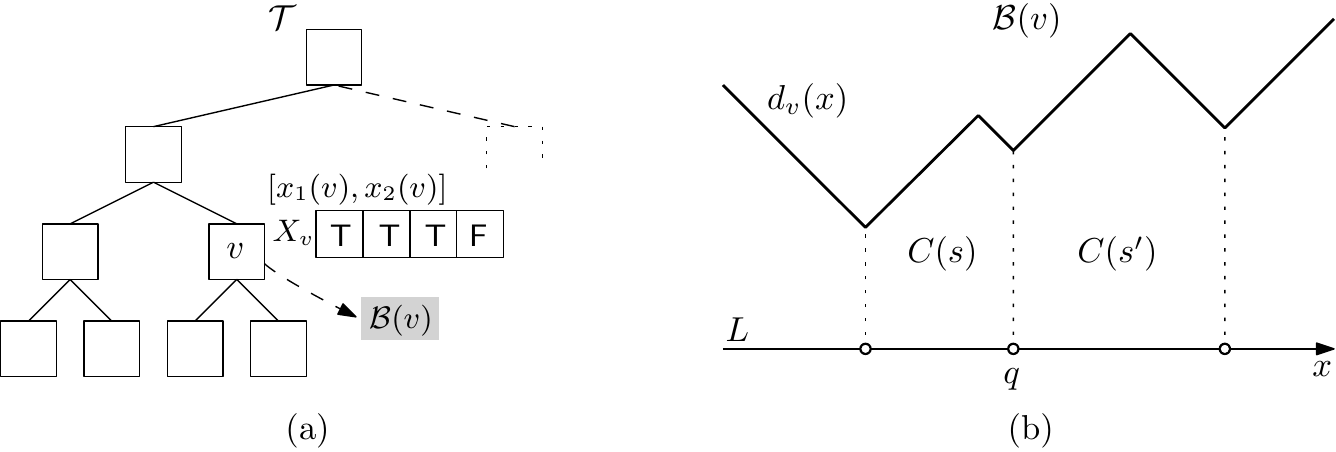}
    \caption{\small
(a) Illustration of a balanced binary search tree $\tree$. A node $v$ in $\tree$ has 
domain $[x_1(v), x_2(v)]$, array $\monolist_v$, and a pointer to $\boundarylist(v)$.
(b) Illustration of $\boundarylist(v)$ and $d_v(x)$.
}
    \label{fig:planesweep}
  \end{center}
\end{figure}


We use a plane sweep algorithm with a horizontal sweep line $L$
to construct the horizontal line segments in $\dspy$.
Note that $\freespace \cap L$ consists of disjoint horizontal segments
along $L$. The status of $L$ is the sequence of segments in 
$\freespace\cap L$ along $L$. 
The status changes while $L$ moves upwards over the plane, but not continuously.
Each update of the status occurs at a particular $y$-coordinate, 
which we call an \emph{event}.
To do such updates efficiently, we maintain three data structures for $L$: 
a \emph{balanced binary search tree} $\tree$ representing the status,
a \emph{boundary list} $\boundarylist$, and a list $\distlist$ of distance functions.
The structures $\boundarylist$ and $\distlist$ are associated structures of $\tree$.

We store the segments of $\freespace \cap L$
in a balanced binary search tree $\tree$ in increasing order of
$x$-coordinate of their left endpoints.
Each node $v$ of $\tree$ corresponds to a horizontal line segment 
$h_v$ of $\freespace \cap L$. 
We store $x_1(h_v)$ and $x_2(h_v)$,
and an array $\monolist_v$ of $m$ Boolean variables at $v$.
We set $\monolist_v[i] = \texttt{T}$ if a point on $h_v$ is 
$\py$-reachable from $s_i$ for $i=1\ldots,m$. 
Otherwise, we set $\monolist_v[i] = \texttt{F}$.
The range of $v$ is $[x_1(v), x_2(v)]$ for 
$x_1(v)=x_1(h_v)$ and $x_2(v)=x_2(h_v)$.
There are at most $n+1$ nodes in $\tree$, and each node maintains an array of size 
$O(m)$, so $\tree$ itself uses $O(nm)$ space in total.
See Figure~\ref{fig:planesweep}(a). 

The list $\boundarylist$ consists of 
boundary lists $\boundarylist(v)$ for nodes $v$ of $\tree$.
Each node $v$ of $\tree$ has a pointer to its boundary list $\boundarylist(v)$,
which is a doubly-linked list of the boundary points 
(including the endpoints of $h_v$) lying on $h_v$.
Each boundary point in $\boundarylist$ is the intersection of $L$ 
and a vertical edge of $\rfmpy$, so there are $O(nm)$ boundary points in $\boundarylist$.

Let $d_\delta(p)=d(s,p)$ for a site $s\in f_\delta(p)$ if $f_\delta(p)\neq\emptyset$, or $d_\delta(p)=-\infty$
for $\delta\in\{\py,\ny,\px,\nx\}$.
The list $\distlist$ consists of distance functions $d_v$ for nodes $v$ of $\tree$.
Let $\pl(r)$ denote a point on $L$ with $x(\pl(r)) = r$ for a real number $r$.
Each node $v$ of $\tree$ has a pointer to its
distance function $d_v(x)= d_\py(\pl(x))$ for $x$ in the range $[x_1(v),x_2(v)]$ of $v$.
It is a piecewise linear function with pieces (segments) of slopes $1$ or $-1$.
See Figure~\ref{fig:planesweep}(b).

There are three types of events: (1) a site event, (2) a bottom-side event, and 
(3) a top-side event.
A site event occurs when $L$ encounters a site in $\sset$.
A bottom-side event occurs when $L$ encounters the bottom side of a rectangle in $\rectset$.
A top-side event occurs when $L$ encounters the top side of a rectangle in $\rectset$.
Thus, there are $m$ site events, $n$ bottom-side events, and $n$ top-side events.
See Figure~\ref{fig:events}.
\begin{figure}[t]
  \begin{center}
    \includegraphics[width=\textwidth]{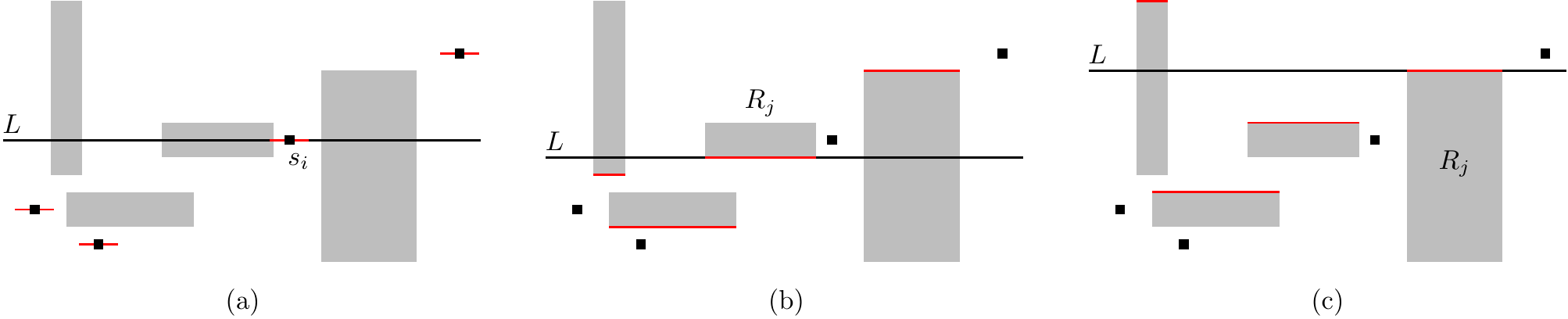}
    \caption{\small
Three types of events.
(a) site events. (b) bottom-side events. (c) top-side events.
     }
    \label{fig:events}
  \end{center}
\end{figure}

We maintain and update $\tree$, $\boundarylist$ and $\distlist$ during the plane sweep for those events.
To handle events, we first sort the events in $y$-coordinate order, which
takes $O((n+m)\log(n+m)) = O(n\log n + m\log m)$ time.
We update $d_v(x)$ only at those events
and keep it unchanged between two consecutive events.
To reflect the distances from sites to $\pl(x)\in h_v$ correctly,
we assign an additive weight to $d_v(x)$, which is the difference in 
the $y$-coordinates between the current event and the last event 
at which $d_v(x)$ is updated.

Initially, when $L$ is at 
the bottom side of $\boundbox$,
$\tree$ consists of one node $v$ with $x_1(v) = x_1(\boundbox)$, $x_2(v) = x_2(\boundbox)$,
and $\monolist_v[i] = \texttt{F}$ for all $i\in\{1,\ldots,m\}$.
$\boundarylist(v)$ has no boundary point and $d_v(x) = -\infty$ for all $x$,
since no points on $L$ is $\py$-reachable from any sites.

\subsection{Handling a site event}
When $L$ encounters a site $s_i\in\sset$, we find the node $v\in\tree$ such that 
$x_1(v) \leq x(s_i) \leq x_2(v)$. 
Every point on $h_v$ is $\py$-reachable from $s_i$,
so we set $\monolist_v[i] = \texttt{T}$.
We can find $v$ in $O(\log n)$ time, and set $\monolist_v[i] = \texttt{T}$ in constant time.
Thus, it takes $O(\log n)$ time to update $\tree$.

For any point $\pl(x)\in h_v$,
$d(s_i,\pl(x)) = |x - x(s_i)|$.
By Lemma~\ref{lem:reverseorder}, there is at most one
maximal interval $I \subset [x_1(v),x_2(v)]$ 
such that $d_v(x)<d(s_i,\pl(x))$ for every $x \in I$.
Moreover, $I$ is bounded from left by $x_1(v)$ or from right by $x_2(v)$
because $d_v(x)$ is continuous and  
consists of pieces (segments) of slopes
$1$ or $-1$, and $d(s_i,\pl(x)) = |x - x(s_i)|$.
We find the boundary point $\pl(x^*)\in h_v$ induced by $s_i$ such that 
$d_v(x^*) = d(s_i,\pl(x^*))$.
If $I$ is bounded from left, we update $d_v(x)$ to $d_v(x)=d(s_i,\pl(x))$ for $x \leq x^*$.
If $I$ is bounded from right, we update $d_v(x)$ to $d_v(x)=d(s_i,\pl(x))$
for $x \geq x^*$.

If there is no such point $\pl(x^*)$,
either $d_v(x)<d(s_i,\pl(x))$ or $d_v(x)>d(s_i,\pl(x))$
for all $x$ with $x_1(v) \leq x \leq x_2(v)$.
If $d_v(x)<d(s_i,\pl(x))$, we update $d_v(x)$ to $d_v(x)=d(s_i,\pl(x))$
for $x_1(v) \leq x \leq x_2(v)$.
If $d_v(x)>d(s_i,\pl(x))$, we do not update $d_v(x)$.

We update $\boundarylist(v)$ by removing all the boundary points 
of $\boundarylist(v)$ lying in the interior of $I$ in time linear to the number of 
the boundary points, and then inserting $\pl(x^*)$ %
into $\boundarylist(v)$. 

Since there are $m$ site events, it takes $O(m \log n)$ time in total to update $\tree$.
The total time to remove the boundary points is linear
to the total number of boundary points in $\dspy$, which is $O(nm)$.

\begin{lemma}\label{time:site}
We can handle all site events in $O(nm)$ time using $O(nm)$ space.
\end{lemma}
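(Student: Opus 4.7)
The plan is to bound the work of the $m$ site events by an amortized argument that charges almost all of the work to the boundary points that are created and destroyed in $\boundarylist$. For a single event at $s_i$, the work splits into three parts: locating the unique node $v\in\tree$ with $x(s_i)\in[x_1(v),x_2(v)]$ and flipping $\monolist_v[i]$; finding the crossing $\pl(x^*)$ between $d_v$ and the distance function $|x-x(s_i)|$; and editing $\boundarylist(v)$ and $d_v$ across the interval $I$. The first part costs $O(\log n)$ via standard binary search in $\tree$, summing to $O(m\log n) = O(nm)$ over all site events. The main question is to argue the second and third parts together run in $O(1+r_i)$ time, where $r_i$ is the number of boundary points that get removed during the event.

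The key structural fact is already at hand: by Lemma~\ref{lem:reverseorder}, the set $I$ on which $d(s_i,\pl(x))>d_v(x)$ is a single maximal subinterval of $[x_1(v),x_2(v)]$ anchored at one of its endpoints. So by comparing $d_v$ and $d(s_i,\cdot)$ at $x_1(v)$ and $x_2(v)$ in $O(1)$ time, I can determine on which side $I$ sits, and then march along the pieces of $d_v$ from that side looking for $\pl(x^*)$. Because both $d_v$ and $d(s_i,\pl(x))=|x-x(s_i)|$ are piecewise linear with slopes $\pm 1$, the crossing test on each piece is $O(1)$. Each piece I traverse and overwrite corresponds to exactly one boundary point of $\boundarylist(v)$ that is removed (I will need to maintain a pointer from each piece of $d_v$ to its bracketing boundary points to make this explicit), and insertion of $\pl(x^*)$ together with its new piece is $O(1)$ using the doubly-linked list pointers exposed by the traversal. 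This yields the $O(1+r_i)$ per-event bound.

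To convert this into an $O(nm)$ total cost, I will account for all boundary points ever inserted into $\boundarylist$ during the entire sweep. A site event creates at most one new boundary point ($\pl(x^*)$), contributing $O(m)$ across all site events; the rectangle events contribute $O(nm)$ new boundary points in total (this is consistent with the $O(m+\log n)$ per-rectangle-event bound claimed in the outline). Hence the total number of boundary points created over the entire sweep is $O(nm)$. Since each removal destroys a previously created boundary point, $\sum_i r_i = O(nm)$, and the overall site-event time is $O(m\log n + m + \sum_i r_i) = O(nm)$. The space bound is then immediate: $\tree$ carries $O(n)$ nodes with $O(m)$-sized arrays for $O(nm)$ space; $|\boundarylist|$ is bounded by the number of boundary points ever created, which is $O(nm)$; and each $d_v$ has complexity proportional to the boundary points currently on $h_v$, so $\distlist$ is also $O(nm)$.

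The main obstacle I expect is making the charging scheme fully rigorous. Concretely, I need to establish and maintain a one-to-one correspondence between the pieces of $d_v$ and the consecutive pairs of entries in $\boundarylist(v)$ (including the two endpoints of $h_v$), and verify that this correspondence is preserved through site, bottom-side, and top-side events, so that every piece traversed during a site event can be billed against a boundary-point removal rather than wasted work. A secondary subtlety is the case where $d_v(x)>d(s_i,\pl(x))$ holds nowhere on $h_v$ (no new boundary point is created) or holds everywhere (the whole function $d_v$ is replaced and every boundary point in $\boundarylist(v)$ is deleted); both must fit into the $O(1+r_i)$ accounting, which they do since in the former case the traversal aborts after one piece and in the latter every traversed piece still pays for a distinct deletion.
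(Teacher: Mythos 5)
Your proposal is correct and follows essentially the same route as the paper: locate the node in $O(\log n)$ time (giving $O(m\log n)$ overall), use Lemma~\ref{lem:reverseorder} plus the slope-$\pm 1$ structure of $d_v$ to conclude that the update interval $I$ is a single subinterval anchored at an endpoint of $h_v$, and charge the scan-and-delete work to the boundary points removed, whose total number over the whole sweep is $O(nm)$. The paper states this amortization more tersely (it simply bills removals against the $O(nm)$ boundary points of $\dspy$), while you spell out the piece-to-boundary-point bookkeeping and the degenerate all-or-nothing cases, which is a harmless elaboration rather than a different argument.
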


\subsection{Handling a bottom-side event}
When $L$ encounters the bottom side of a rectangle $R\in\rectset$,
the line segment of $\freespace \cap L$ incident to
the bottom side is replaced by two line segments by the event. See Figure~\ref{fig:events}(b).
Thus, we update $\tree$ by finding the node $v\in\tree$ with 
$x_1(v) \leq x_1(R) < x_2(R) \leq x_2(v)$, removing $v$ from $\tree$, and then 
inserting two new nodes $u$ and $w$ into $\tree$.
We set $(x_1(u),x_2(u)) = (x_1(v),x_1(R))$,
$(x_1(w),x_2(w)) = (x_2(R),x_2(v))$, $\monolist_u = \monolist_v$,
and $\monolist_w = \monolist_v$.
This takes $O(\log n)$ time 
since $\tree$ is a balanced binary search tree.
It takes $O(m)$ time to copy the Boolean values of $\monolist_v$ to $\monolist_u$ and $\monolist_w$, and to remove $\monolist_v$.
Thus, it takes $O(m + \log n)$ time to update $\tree$.

We update $\boundarylist$ by inserting two lists $\boundarylist(u)$ and 
$\boundarylist(w)$ into $\boundarylist$, 
copying the boundary points of $\boundarylist(v)$ to the lists, 
and then removing $\boundarylist(v)$ from $\boundarylist$.
By Corollary~\ref{co:reverse},
$h_v$ intersects $O(m)$ cells in $\rfmpy$.
Thus, $\boundarylist(v)$ has $O(m)$ boundary points, and 
the update to $\boundarylist(u)$ and $\boundarylist(w)$ takes $O(m)$ time.
There is no change to distance functions. 

Since there are $n$ bottom-side events, it takes $O(nm + n \log n)$ time to update $\tree$
and $O(nm)$ time to update $\boundarylist$ for all bottom-side events.

\begin{lemma}\label{time:bottom}
We can handle all bottom-side events in $O(nm + n \log n)$ time using $O(nm)$ space.
\end{lemma}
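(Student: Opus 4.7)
The plan is to bound the cost of processing a single bottom-side event by $O(m + \log n)$ time and $O(m)$ extra space, and then sum over the $n$ such events. At a bottom-side event induced by rectangle $R$, exactly one segment $h_v$ of $\freespace \cap L$, corresponding to the node $v \in \tree$ with $x_1(v) \le x_1(R) < x_2(R) \le x_2(v)$, splits into two shorter segments with $x$-ranges $[x_1(v), x_1(R)]$ and $[x_2(R), x_2(v)]$. I would update $\tree$ by locating $v$, deleting it, and inserting two fresh nodes $u$ and $w$ for those ranges. Since $\tree$ is balanced with at most $n+1$ nodes throughout the sweep, these operations take $O(\log n)$ time. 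The arrays $\monolist_u$ and $\monolist_w$ are then filled by copying $\monolist_v$ in $O(m)$ time, which is correct because $\py$-reachability of each site to the two sub-segments is identical to that of $h_v$: they lie at the same $y$-coordinate and $L$ has just met the obstacle from below, so no previously reachable site becomes unreachable on either sub-segment.

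For $\boundarylist$, I would partition the boundary points of $\boundarylist(v)$ by $x$-coordinate into two new lists $\boundarylist(u)$ and $\boundarylist(w)$, and then discard $\boundarylist(v)$. The crucial ingredient is that by Corollary~\ref{co:reverse} the segment $h_v$ meets only $O(m)$ cells of $\rfmpy$, so $|\boundarylist(v)| = O(m)$ and the partition runs in $O(m)$ time. For $\distlist$, the functions $d_u$ and $d_w$ are just the restrictions of $d_v$ to the two sub-ranges, and thanks to the additive-weight convention already in place to compensate for $y$-offsets, no arithmetic is required beyond pointer bookkeeping.

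Summing over the $n$ bottom-side events yields total running time $O(n(m + \log n)) = O(nm + n\log n)$. For space, each event introduces at most two $O(m)$-size arrays and two new boundary lists while simultaneously discarding the structures at $v$, so the aggregate space stays within the global $O(nm)$ bound already used to store $\tree$, $\boundarylist$, and $\distlist$. The main obstacle is really controlling the boundary-list work: a naive split could touch as many boundary points as the full diagram contains, and it is only the reverse-order argument behind Corollary~\ref{co:reverse} that caps a single $\boundarylist(v)$ at $O(m)$ points, which is what keeps the per-event charge $O(m)$ instead of $\Omega(nm)$.
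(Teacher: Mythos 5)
Your proposal matches the paper's argument: locate and split the node in $O(\log n)$ time, copy the $O(m)$-size Boolean array, split the boundary list using Corollary~\ref{co:reverse} to bound its size by $O(m)$, leave the distance functions untouched, and sum over the $n$ events to get $O(nm+n\log n)$ time within the global $O(nm)$ space. This is essentially the same proof, with the same key ingredient (the reverse-order corollary capping each boundary list at $O(m)$ points).
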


\subsection{Handling a top-side event}\label{sec:top}
When $L$ encounters the top side of a rectangle $R\in\rectset$,
the two consecutive segments in $\freespace \cap L$ incident to $R$
are replaced by one segment spanning them by the event. See Figure~\ref{fig:events}(c).
We update $\tree$ by finding the two nodes $u, w\in\tree$ with $x_2(u) = x_1(R)$ and 
$x_1(w)=x_2(R)$, removing $u$ and $w$ from $\tree$, and then inserting 
a new node $v$ into $\tree$.
We set $x_1(v)=x_1(u)$, $x_2(v)=x_2(w)$, and 
$\monolist_v[i] = \monolist_u[i]\vee\monolist_w[i]$ for each $i=1,\ldots,m.$
This takes $O(m + \log n)$ time. 

We update the distance function $d_v(x)$
for $x$ with $x_1(v) \leq x \leq x_1(R)$
as follows.
The geodesic path from any point $\pl(x)\in h_u$  
to $s_i$ with $\monolist_u[i] = \texttt{F}$ and $\monolist_w[i] = \texttt{T}$
is $xy$-monotone by Lemma~\ref{lem:monotone}, and thus 
$d(s_i,\pl(x)) = y(\pl(x)) - y(s_i) + |x(s_i) - x|$.
Also, we observe that $x(s_i) \geq x$ for any $x$.
Thus, every $\pl(x)$ has the same site $s^*$ as its farthest site 
among the sites $s_i$
with $\monolist_u[i] = \texttt{F}$ and $\monolist_w[i] = \texttt{T}$.
Then $d(s^*,\pl(x)) = y(\pl(x)) - y(s^*) + x(s^*) - x$.
By Lemma~\ref{lem:reverseorder}, there is at most one maximal interval $I$
of $x\in [x_1(v),x_1(R)]$ such that $d_v(x) \leq d(s^*,\pl(x))$.
Moreover, $I$ is bounded from left by $x_1(v)$.
We find the boundary point $\pl(x^*)\in h_u$ such that $d_v(x^*) = d(s^*,\pl(x^*))$,
and update $d_v(x)$ to $d(s^*,\pl(x))$ for $x \leq x^*$.

If there is no such point $\pl(x^*)$,
either $d_v(x)<d(s^*,\pl(x))$ or $d_v(x)>d(s^*,\pl(x))$
for all $x$ with $x_1(v) \leq x \leq x_1(R)$.
If $d_v(x)<d(s^*,\pl(x))$, we update $d_v(x)$ to $d_v(x)=d(s^*,\pl(x))$
for $x_1(v) \leq x \leq x_1(R)$.
If $d_v(x)>d(s^*,\pl(x))$, we do not update $d_v(x)$.

We update $\boundarylist[x_1(v),x_1(R)]$,
which is a part of $\boundarylist(v)$ with range $[x_1(v), x_1(R)]$,
by removing all the
boundary points in the interior of $I$ in time linear to the number of 
the boundary points, and then inserting $\pl(x^*)$ as a boundary point.
We can handle the case of $x$ with $x_2(R) \leq x \leq x_2(v)$,
and update $\boundarylist[x_2(R),x_2(v)]$ analogously.

\subsubsection{Computing distance functions for a top side}
We show how to compute $d_v(x)$ for $x\in[x_1(R), x_2(R)]$
and update $\boundarylist[x_1(R),x_2(R)]$ efficiently.
Lemma~\ref{lem:monotone} implies the following observation.

\begin{observation}\label{ob:shortest_on_top}
For any point $p$ on the top side of a rectangle $R\in\rectset$ and any site $s\in\sset$ from which $p$ is $\py$-reachable,
every geodesic path from $p$ to $s$ passes through the top-left corner or 
the top-right corner of $R$.
\end{observation}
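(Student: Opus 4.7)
The plan is to reduce the observation to a short continuity argument built on top of Lemma~\ref{lem:monotone}. Since $p$ is $\py$-reachable from $s$, Lemma~\ref{lem:monotone} says that every geodesic from $s$ to $p$ is $\py$-monotone; reversing the parametrization, every geodesic $\pi$ from $p$ to $s$ is $\ny$-monotone, so its $y$-coordinate is non-increasing along $\pi$ and $y(s)\le y(p)$. Let $\sigma$ denote the (closed) top side of $R$ and set $y^{\star}:=y(p)$. If $p$ is itself a top corner of $R$, the conclusion is immediate; otherwise $x_1(R)<x(p)<x_2(R)$, and I would consider the first moment $\pi$ leaves $\sigma$, namely $t_0:=\inf\{t\mid \pi(t)\notin\sigma\}$, which is well-defined whenever $s\notin\sigma$.

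Then I would argue that $\pi(t_0)$ must be a top corner of $R$. By continuity of $\pi$ and closedness of $\sigma$, $\pi(t_0)\in\sigma$, so $y(\pi(t_0))=y^{\star}$. In every neighborhood of $t_0$, the path $\pi$ must contain points of $\freespace\setminus\sigma$ with $y$-coordinate at most $y^{\star}$, by $\ny$-monotonicity. However, every point with $y<y^{\star}$ and $x_1(R)<x<x_2(R)$ lies in the open interior of $R$, which $\pi\subseteq\freespace$ cannot enter. The only way to escape $\sigma$ without increasing $y$ and while staying in $\freespace$ is therefore through one of its two endpoints, so $x(\pi(t_0))\in\{x_1(R),x_2(R)\}$; that is, $\pi(t_0)$ is the top-left or the top-right corner of $R$. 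Hence $\pi$ passes through one of those two corners.

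The main (and essentially only) obstacle is this continuity step. It is routine but must be stated carefully: the combination of $\pi\subseteq\freespace$ and $\ny$-monotonicity together rule out every escape from an interior point of $\sigma$ except through a corner; everything else follows directly from Lemma~\ref{lem:monotone}.
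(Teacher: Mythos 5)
Your proof is correct, and it supplies exactly the detail that the paper omits: the paper states only that ``Lemma~\ref{lem:monotone} implies the following observation'' without spelling out the argument. Your first-exit/continuity argument is the natural way to make that implication precise. Reversing a $\py$-monotone geodesic to get a $\ny$-monotone path from $p$ to $s$, noting every point of the path has $y$-coordinate at most $y(p)$, and observing that the only way to leave the closed top side $\sigma$ while staying in $\freespace$ and not increasing $y$ is through one of the two corners is exactly the intended reasoning. Your treatment of $t_0 = \inf\{t\mid\pi(t)\notin\sigma\}$ is careful: $\pi(t_0)\in\sigma$ by closedness and continuity, and the escape points immediately after $t_0$ would land either in $\sigma$ (if $y=y^\star$ with $x$ between the corners) or in the open interior of $R$ (if $y<y^\star$), both impossible when $\pi(t_0)$ is strictly interior to $\sigma$.

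One remark: you correctly note that $t_0$ is well-defined ``whenever $s\notin\sigma$,'' but you leave the case $s\in\sigma$ unaddressed. In that degenerate case the statement as literally worded can actually fail (the geodesic could be a subsegment of $\sigma$ hitting no corner). The paper sidesteps this because the observation is only ever applied to sites $s\in\sset_k$, which by definition lie strictly below the polygonal curve containing the top side of $R_k$; so $s\notin\sigma$ is automatic in every use. It would be worth making that explicit, or adding the harmless convention that if $s$ lies on $\sigma$ one takes the nearest corner to be ``passed through.'' Aside from that caveat, the argument is complete and matches the approach the paper implicitly relies on.
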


For an index $k$,
let $\alpha_k$ and $\beta_k$ denote 
the top-left corner and the top-right corner of $R_k\in\rectset$, and
let $\sset_k$ denote the set of the sites
that lie below the polygonal curve consisting of
$\pidl(\alpha_k)$, the top side of $R_k$, and $\pidr(\beta_k)$.

For the top-side event of $R=R_k$, let $\alpha=\alpha_k$ and  $\beta=\beta_k$. 
Note that $x(\alpha) = x_1(R)$ and $x(\beta) = x_2(R)$.
Let $\sset^T$ be the set of the sites $s_i$, with 
$\monolist_v[i] = \texttt{T}$ for all $i=1,\ldots,m$.
We partition $\sset^T$ into three disjoint subsets, $\sset_k$, 
$\sset(\alpha)$, and $\sset(\beta)$,
such that $\sset(\alpha) = \{s_i \in \sset^T\setminus\sset_k \mid x(s_i) \leq x_1(R)\}$ and
$\sset(\beta) = \{s_i \in \sset^T\setminus\sset_k \mid x(s_i) \geq x_2(R)\}$.
See Figure~\ref{fig:topside}.

Every geodesic path from any site in $\sset(\alpha)$ or $\sset(\beta)$ to any point on 
the top side of $R$ is $xy$-monotone. Thus for any point $\pl(x)$ 
lying on the top side of $R$, 
we can compute $d(s^\alpha, \pl(x))$ and $d(s^\beta, \pl(x))$, where
$s^\alpha$ and $s^\beta$ are the farthest sites of $\pl(x)$ among sites in
$\sset(\alpha)$ and among sites in $\sset(\beta)$, respectively, 
as we did 
for $\boundarylist[x_1(v),x_1(R)]$ or $\boundarylist[x_2(R),x_2(v)]$.

Let $R_a$ be the rectangle hit first by the vertical ray emanating from $\alpha$ going downwards,
and let $R_b$ be the rectangle hit first by the vertical ray emanating from $\beta$ going downwards.
See Figure~\ref{fig:topside}.

We compute $d(\alpha, s)$ and $d(\beta,s)$ for each $s\in\sset_k$,
and then compute $d_v(x)$ for $x$ with $x_1(R) \leq x \leq x_2(R)$,
where $v$ is the node of $\tree$ corresponding to the top-side event of $R$.
The top-side events by $R_a$ and $R_b$ were handled before the top-side event of $R$,
and thus we have $d(\alpha_a, s)$ and $d(\beta_a,s)$ for sites $s\in\sset_a$,
and $d(\alpha_b, s')$ and $d(\beta_b,s')$ for sites $s'\in\sset_b$.
By Observation~\ref{ob:shortest_on_top}, we can compute $d(\alpha, s)$ and $d(\beta, s)$ for a site $s\in\sset_k$ 
as follows.
\begin{itemize}
\item
If $s\in\sset_a$, 
$d(\alpha,s) = \min\{d(\alpha,\alpha_a) + d(\alpha_a,s), d(\alpha,\beta_a) + d(\beta_a, s) \}$.
If $s\notin\sset_a$,
$d(\alpha,s) = |x(\alpha) - x(s)| + |y(\alpha) - y(s)|$
since $\pi(\alpha,s)$ is $xy$-monotone by Lemma~\ref{lem:monotone}.
\item
If $s\in\sset_b$, $d(\beta,s) = \min\{d(\beta,\alpha_b) + d(\alpha_b,s), d(\beta,\beta_b) + d(\beta_b, s) \}$.
If $s\notin\sset_b$,
$d(\beta,s) = |x(\beta) - x(s)| + |y(\beta) - y(s)|$
since $\pi(\beta,s)$ is $xy$-monotone by Lemma~\ref{lem:monotone}.
\end{itemize}

\begin{figure}[t]
  \begin{center}
    \includegraphics[width=.5\textwidth]{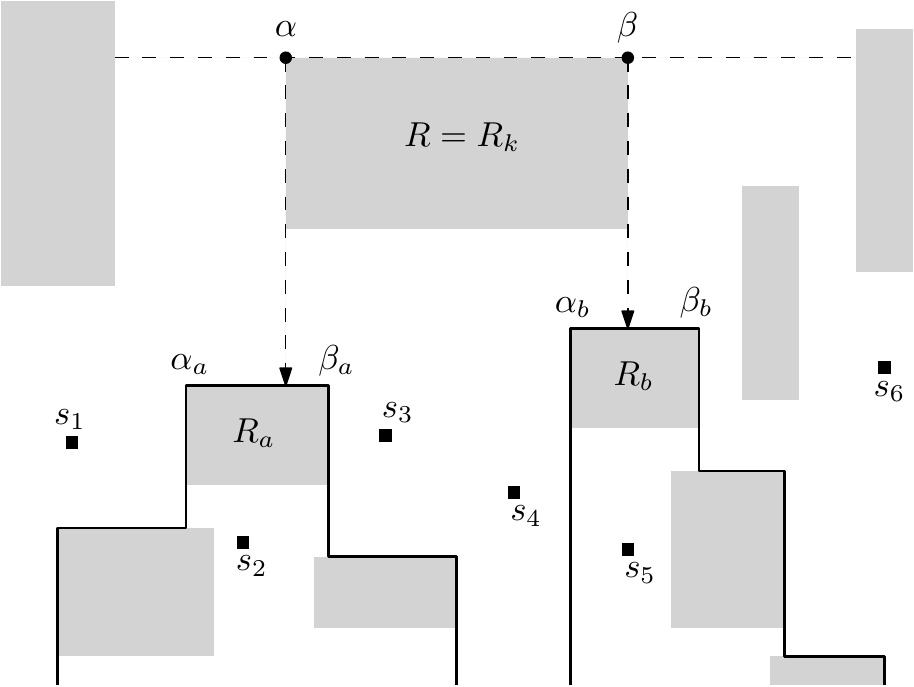}
    \caption{\small
$\sset^T = \{s_1,s_2,s_3,s_4,s_5,s_6\}$ is partitioned into 
$\sset_k = \{s_2,s_3,s_4,s_5\}$, $\sset(\alpha) = \{s_1\}$, and $\sset(\beta) = \{s_6\}$.
For two rectangles $R_a$ and $R_b$,
$\sset_a = \{s_2\}$ and $\sset_b = \{s_5\}$.
     }
    \label{fig:topside}
  \end{center}
\end{figure}

By Observation~\ref{ob:shortest_on_top},
every geodesic path from $s$ to $\pl(x)$ passes through either $\alpha$ or $\beta$.
We denote by $\da(i,x) = d(\alpha,s_i) + x - x(\alpha)$ the length of a geodesic path from a site $s_i$ to $\pl(x)$ passing through $\alpha$,
and denote by $\db(i,x) = d(\beta,s_i)+x(\beta)-x$ the length of a geodesic path from $s_i$ to $\pl(x)$ passing through $\beta$.
Let $D(x) = \max_{s_i\in\sset_k}\min\{\da(i,x),\db(i,x)\}$
for all $x$ with $x(\alpha) \leq x \leq x(\beta)$.
For ease of description, let $\da(i) = d(\alpha,s_i)$
and $\db(i) = d(\beta,s_i) + x(\beta) - x(\alpha)$.
We observe that $\da(i) \leq \db(i)$ for every $s_i\in\sset_k$.
Let $\dba(i)=\db(i)-\da(i)$.

Our goal is to compute $D(x)$ in $O(m)$ time.
To achieve this,
we consider two cases, either (1) $\dba(a) \geq \dba(b)$ 
for every indices $a$ and $b$ with $1\leq a < b\leq m$,
or (2) $\dba(a) < \dba(b)$ for some indices $a$ and $b$ with $1\leq a < b\leq m$.
The following two lemmas show how to compute $D(x)$ in $O(m)$ time
for these two cases.

\begin{figure}[ht]
  \begin{center}
    \includegraphics[width=\textwidth]{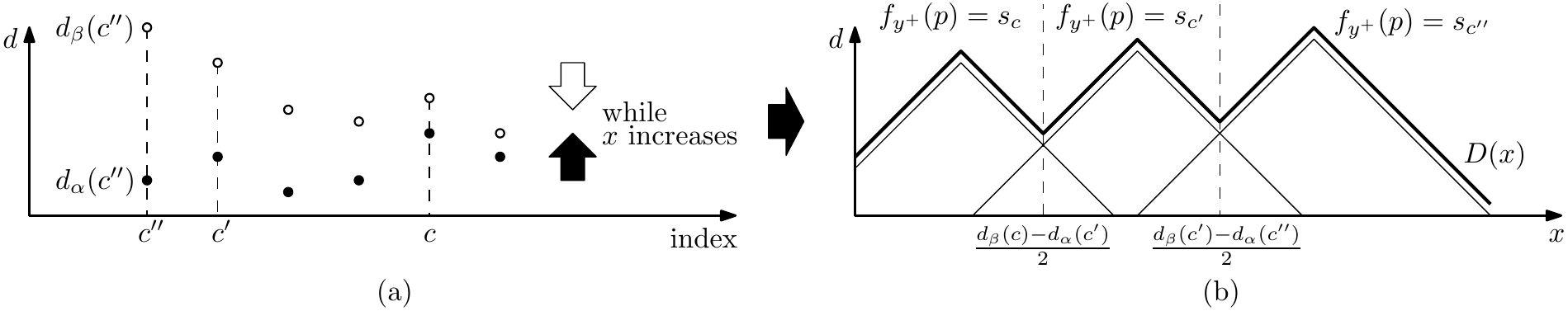}
    \caption{\small
Two graphs showing the distances from sites to $\pl(x)$ for $x_1(R) \leq x\leq x_2(R)$ 
with different domains.
We find sites $s_c = \max_{i\in\{1,\ldots,m\}} d_\alpha(i)$,
$s_{c'} = \max_{i\in\{1,\ldots,c-1\}} d_\alpha(i)$, and $s_{c''} = \max_{i\in\{1,\ldots,c'-1\}} \da(i)$ recursively.
They are the farthest sites from $\pl(x)$ moving rightwards.
(a) The graph with respect to indices of sites.
(b) The graph for $x$ with $x_1(R) \leq x \leq x_2(R)$.
     }
    \label{fig:topside-graph}
  \end{center}
\end{figure}

\begin{lemma}\label{lem:diffmonotone}
If $\dba(a) \geq \dba(b)$ for every two indices $a$ and $b$ 
with $1\leq a < b\leq m$, we can compute $D(x)$ in $O(m)$ time.
\end{lemma}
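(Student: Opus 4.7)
The plan is to identify $D(x)$ as the upper envelope of the $m$ tent functions $T_i(x):=\min\{\da(i,x),\db(i,x)\}$ and show that, under the hypothesis, this envelope has $O(m)$ complexity and can be constructed in $O(m)$ time by a single sweep.

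First I would isolate the structure of each $T_i$. It is piecewise linear: it ascends with slope $+1$ along the line $\da(i)+(x-x(\alpha))$ for $x\le p_i:=x(\alpha)+\dba(i)/2$, then descends with slope $-1$ along $\db(i)-(x-x(\alpha))$ for $x\ge p_i$, and has peak height $(\da(i)+\db(i))/2$. By the hypothesis $\dba(1)\ge\dba(2)\ge\cdots\ge\dba(m)$, the peaks satisfy $p_1\ge p_2\ge\cdots\ge p_m$; hence peak abscissas are already in decreasing order of index without any additional sorting.

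Next I would establish a pruning lemma that cuts the candidate set down to the chain indicated in Figure~\ref{fig:topside-graph}. If $j<i$ and $\da(j)\ge\da(i)$, then the monotonicity of $\dba$ forces $\db(j)-\db(i)\ge\da(j)-\da(i)\ge0$, and together with $\da(j)\ge\da(i)$ this gives $T_j\ge T_i$ pointwise. Hence every site that touches the envelope is a left-to-right prefix maximum of $\da$, and these prefix maxima, listed in reverse order, are precisely the chain $s_c,s_{c'},s_{c''},\ldots$ with $c_{j+1}=\argmax_{i<c_j}\da(i)$. The chain is extracted in $O(m)$ time by scanning the indices once and recording the positions at which $\da$ attains a new prefix maximum.

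Then I would construct $D(x)$ by walking the chain from left to right along the $x$-axis. Starting from $s_c$ at $x=x(\alpha)$, the envelope follows $s_c$'s ascending branch up to $p_c$, then $s_c$'s descending branch until it is overtaken by the ascending branch of the next candidate $s_{c'}$, the crossing abscissa being obtained in $O(1)$ by equating two linear expressions. The sweep then resumes from $s_{c'}$, continuing through $s_{c''},s_{c'''},\ldots$; any candidate whose ascending branch lies entirely below the current descending branch (equivalently, one that is Pareto-dominated in $(\da,\db)$ by a previously processed envelope site) is skipped in $O(1)$. Each candidate is processed at most once and contributes at most two linear pieces to $D(x)$, so the whole construction outputs $D(x)$ as $O(m)$ linear pieces in $O(m)$ time.

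The main obstacle is the pruning step: verifying that the monotonicity of $\dba$ transfers an inequality on $\da$ to the same inequality on $\db$, thereby reducing the envelope to the chain of prefix maxima of $\da$. Once this reduction is in hand, the remaining construction is a straightforward linear-time merge of tent functions with presorted peaks.
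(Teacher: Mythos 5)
Your proposal is correct and is essentially the paper's own argument: your chain of (smallest-index) prefix maxima of $\da$, namely $s_c,s_{c'},s_{c''},\ldots$, is exactly the candidate set the paper extracts with the stack in Algorithm~\ref{al:farthest}, the crossing abscissae such as $(\db(c)-\da(c'))/2$ are the same boundary points, and your tent-domination pruning lemma is the same content as the paper's combination of the $\dba$-monotonicity hypothesis with Lemma~\ref{lem:reverseorder}, just derived purely analytically. One small caution: the operative skip test during the sweep is the Pareto test in $(\da,\db)$ (your parenthetical), not literally ``ascending branch lies below the current descending branch,'' since a dominated candidate's ascending branch can cross the current descending branch inside the domain but only after that candidate's own peak; with that reading your construction matches the paper's.
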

\begin{proof}
Let $s_c\in f_\py(\alpha)$ be the farthest site from $\alpha$ with the smallest index $c$.
By Lemma~\ref{lem:reverseorder} and $x(s_a)\le x(s_b)$ for every two indices $a$ and $b$ with $1\leq a < b\leq m$,
no site $s_i \not\in f_\py(\alpha)$ for any $i=c+1,\ldots,m$ is in $f_\py(\pl(x))$ for any point $\pl(x)$ on the top side of $R$. 
Observe that $\da(c) + t \leq \db(c) - t$
for $0\leq t\leq \dba(c) / 2$.
Also, $\da(i) + t \leq \db(i) - t$ still holds
for $0\leq t\leq \dba(c) / 2$
since $\dba(i) \geq \dba(c)$
for every index $i=1,\ldots,c-1$.

Let $c'$ be the smallest index satisfying
$\da(c') = \max_{i\in\{1,\ldots,c-1\}}\da(i)$.
Then $s_{c'}$ is in $f_\py(\pl(x(\alpha)+t))$ for $t \geq (\db(c) - \da(c')) / 2$.
For $\dba(c) / 2 \leq t \leq (\db(c) - \da(c')) / 2$,
$s_c \in f_\py(\pl(x(\alpha)+t))$.
Therefore, $\pl((\db(c) - \da(c')) / 2)$ becomes a boundary point.
See Figure~\ref{fig:topside-graph}.
Using $c'$, we compute the smallest index $c''$ satisfying
$\da(c') = \max_{i\in\{1,\ldots,c'-1\}}\da(i)$
and find all farthest sites $f_\py(\pl(x(\alpha)+t))$ for all $t$ recursively.

We use a stack storing indices of sites to compute $c'$ and $c''$ recursively.
We can find $s_c$ in $O(m)$ time. 
Let $\stacktop$ be the top element (site) of the stack.
Initially, the stack contains $c$.
Let $j$ be the index at the $i$-th iteration for $i$ from $c-1$ to $1$.
We pop $\stacktop$ from the stack until $\da(j) \geq \da(\stacktop)$.
Then we push $j$ into the stack if $\da(j) < \da(\stacktop)$.
Observe that the stack never be empty since $\da(c) > \da(j)$ for all $j\in\{1,\ldots,c-1\}$.

The pseudocode of the algorithm
is given in Algorithm~\ref{al:farthest}.
We repeat this until we compute $D(x)$ and all boundary points.
This takes $O(m)$ time.
\end{proof}

\begin{algorithm}\caption{Computing Farthest Sites on Rectangle}\label{al:farthest}
\begin{algorithmic}
\Procedure{FarthestSites}{$\{s_1,\cdots,s_m\},R$} \Comment{$x(s_1) \le \ldots \le x(s_m)$}
	\State $c \gets \argmax_{i\in\{1,\ldots,m\} }\da(i)$ \Comment{$\alpha$ is the top-left corner of $R$}
	\State stack $A \gets \{c\}$ 
	\For{$i = c-1$ to $1$} 
		\If{$\da(i) < \db(\stacktop(A))$} \Comment{$\beta$ is the top-right corner of $R$}
			\State push $i$ into $A$
		\ElsIf{$\da(i) \geq \da(\stacktop(A))$}
			\While{$\da(i) \geq \da(\stacktop(A))$}
				\State pop $\stacktop(A)$ from $A$
			\EndWhile
			\State push $i$ into $A$
		\EndIf
	\EndFor \Comment{$A\neq\emptyset$ during all iterations by $c$.}
	\State sort $A$ in reverse order 
	\State $x \gets x(\alpha)$
	\State set $\pl(x)$ as a boundary point
	\While{$|A| > 1$}
		\State $i \gets \stacktop(A)$
		\State pop $\stacktop(A)$ from $A$
		\State $x' \gets (\db(i)-\da(\stacktop(A)))/2$
		\State $D(t) \gets \min\{d(\alpha,s_i) + t - x(\alpha),d(\beta,s_i)+x(\beta)-t\}$ for $x \leq t \leq x'$
		\State $x \gets x'$
		\State set $\pl(x)$ as a boundary point
	\EndWhile
	\State $x' \gets x(\beta)$
	\State $D(t) \gets \min\{d(\alpha,s_{\stacktop(A)}) + t - x(\alpha),d(\beta,s_{\stacktop(A)})+x(\beta)-t\}$ for $x \leq t \leq x'$
	\State set $\pl(x')$ as a boundary point
	\State \textbf{return} $D(x)$ and boundary points
\EndProcedure
\end{algorithmic}
\end{algorithm}

If $d_{\beta\alpha}(a) < d_{\beta\alpha}(b)$ for some indices $a$ and $b$ with $a<b$, 
we can remove either $s_a$ or $s_b$ by the following lemma.

\begin{lemma}\label{lem:diffprune}
If there are two indices $a$ and $b$ with $a<b$ such that  
$d_{\beta\alpha}(a) < d_{\beta\alpha}(b)$,
either $s_a$ or $s_b$
is a farthest site from no point $\pl(x)$ for $x$ with $x_1(R)\leq x\leq x_2(R)$.
\end{lemma}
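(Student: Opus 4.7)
The plan is to parameterize the top side of $R$ by $t = x - x(\alpha) \in [0, x(\beta)-x(\alpha)]$ and represent the distance from each $s_i \in \sset_k$ to $\pl(x(\alpha)+t)$ as the tent function $T_i(t) := \min\{\da(i)+t,\ \db(i)-t\}$. Because $\da(i) \le \db(i)$, $T_i$ rises from $\da(i)$ with slope $+1$, reaches its apex at $t_i^{\ast} = \dba(i)/2$, and then falls with slope $-1$. In this language, the hypothesis $\dba(a) < \dba(b)$ places the apex of $T_a$ strictly to the left of the apex of $T_b$.

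The key intermediate step is to show that $T_a - T_b$ is nonincreasing on the whole parameter interval. Splitting at the two apexes, on $[0, t_a^{\ast}]$ both tents climb with slope $+1$, so $T_a - T_b$ is the constant $\da(a) - \da(b)$; on $[t_a^{\ast}, t_b^{\ast}]$ the tent $T_a$ descends with slope $-1$ while $T_b$ still climbs with slope $+1$, so $T_a - T_b$ has slope $-2$; on $[t_b^{\ast}, x(\beta)-x(\alpha)]$ both descend with slope $-1$, so $T_a - T_b$ is the constant $\db(a) - \db(b)$. The total drop equals $(\da(a)-\da(b)) - (\db(a) - \db(b)) = \dba(b) - \dba(a) > 0$, so $T_a - T_b$ strictly decreases on the middle piece.

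With this monotonicity I would conclude by contradiction. Suppose both $s_a$ and $s_b$ occur as farthest sites on the top side of $R$, so there are parameters $p, q \in [0, x(\beta)-x(\alpha)]$ with $s_a$ farthest at $\pl(x(\alpha)+p)$ and $s_b$ farthest at $\pl(x(\alpha)+q)$. Being a farthest site forces $T_a(p) \ge T_b(p)$ and $T_b(q) \ge T_a(q)$, that is, $(T_a - T_b)(p) \ge 0 \ge (T_a - T_b)(q)$. Nonincreasingness gives $p \le q$, and the strict-decrease piece upgrades this to $p < q$. Then $\pl(x(\alpha)+p)$ lies strictly to the left of $\pl(x(\alpha)+q)$ on a horizontal segment contained in $\freespace$, and applying Lemma~\ref{lem:reverseorder} with these two farthest-site choices yields $x(s_a) > x(s_b)$, contradicting the indexing convention $a < b \Rightarrow x(s_a) \le x(s_b)$.

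The main obstacle will be handling the degenerate case in which both $(T_a - T_b)(p)$ and $(T_a - T_b)(q)$ vanish simultaneously at distinct parameters, together with the hypothesis clause of Lemma~\ref{lem:reverseorder} ("$f_p \notin f(q)$ or $f_q \notin f(p)$"). Here I would argue that a simultaneous zero at distinct $p, q$ would force both parameters to lie in the same flat piece of $T_a - T_b$, hence $\da(a)=\da(b)$ or $\db(a)=\db(b)$; combining with the strict drop $\dba(a)<\dba(b)$ and the general-position assumption (no four co-equidistant points) pins down enough strictness to invoke Lemma~\ref{lem:reverseorder} and close the argument.
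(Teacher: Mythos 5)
Your route is genuinely different from the paper's, and the core observation you isolate — that $T_a - T_b$ is a piecewise-linear nonincreasing function with total drop $\dba(b) - \dba(a) > 0$ — is correct and tidy. But it falls short of what the lemma asserts, and the place where it breaks is exactly the degeneracy you flag at the end, which your proposed fix does not repair.

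The paper's proof establishes something strictly stronger than monotonicity of $T_a - T_b$: it shows that the \emph{endpoint values} $\da(a) - \da(b)$ and $\db(a) - \db(b)$ have the same sign, so one tent lies strictly below the other for every $t \in [0, x(\beta) - x(\alpha)]$. It does this geometrically: the hypothesis $\dba(a) < \dba(b)$ forces $\pi(\alpha,s_b)$ and $\pi(\beta,s_a)$ to be disjoint, then $y(s_a) \neq y(s_b)$, and then the construction with $\piru(s_b)$ or $\pilu(s_a)$ gives a strict inequality on both $\da$ and $\db$ simultaneously. Your analytic argument only establishes that $T_a - T_b$ changes sign at most once; it leaves open the case where the left endpoint value is positive and the right endpoint value is negative, i.e.\ the tents cross at a unique interior point $p^{\ast}$.

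In that crossing case your contradiction via Lemma~\ref{lem:reverseorder} does not close. From $(T_a - T_b)(p) \ge 0 \ge (T_a - T_b)(q)$ and monotonicity you get $p \le q$, but the step ``the strict-decrease piece upgrades this to $p < q$'' is not justified: if both $s_a$ and $s_b$ are farthest only at $p = q = p^{\ast}$, both sides are zero and $p = q$, and Lemma~\ref{lem:reverseorder} simply cannot be invoked because it requires two distinct points with $x(p) < x(q)$. Your fallback argument about the flat pieces ($\da(a)=\da(b)$ or $\db(a)=\db(b)$) addresses the case of a zero \emph{interval}, not a zero at a single interior point of the slope-$(-2)$ segment; the general-position assumption (no point equidistant from four or more sites) says nothing about two sites tying at one point. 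To complete your route you would need to independently rule out the crossing case, which is exactly the extra geometric content the paper supplies and your proposal lacks.
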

\begin{proof}
The proof is similar to that of Lemma~\ref{lem:reverseorder}.
First, we show that
if there are two geodesic paths $\pi(\alpha,s_b)$ and $\pi(\beta,s_a)$
that intersect each other, then $d_{\beta\alpha}(a) \geq d_{\beta\alpha}(b)$.
Let $q$ be a point in the intersection of the paths.
Then 
\begin{equation*}
	d(\alpha,s_b) = d(\alpha,q)+d(q,s_b) \quad\text{and}\quad d(\beta,s_a) = d(\beta,q)+d(q,s_a).	
\end{equation*}

We observe that $d(\alpha,s_a) \leq d(\alpha,q)+d(q,s_a)$ and $d(\beta,s_b) \leq d(\beta,q)+d(q,s_b)$.
Adding these two inequalities,
we obtain 
\begin{eqnarray*}
	d(\alpha,s_a) + d(\beta,s_b) &\leq& d(\alpha,q)+d(q,s_b)+d(\beta,q)+d(q,s_a)\\
	&=& d(\alpha,s_b) + d(\beta,s_a)
\end{eqnarray*}
and thus, $d(\beta,s_b) - d(\alpha,s_b) \leq d(\beta,s_a) - d(\alpha,s_a)$.
Since $d(\beta,s_b) - d(\alpha,s_b) = \db(b) + C - \da(b)$ and $d(\beta,s_a) - d(\alpha,s_a) = \db(a) + C - \da(a)$,
we have $\db(b) - \da(b) \leq \db(a) - \da(a)$, where $C = x(\beta) - x(\alpha)$.
Therefore, $\dba(a) \geq \dba(b)$.

By contraposition, if $\dba(a) < \dba(b)$,
no two geodesic paths $\pi(\alpha,s_b)$ and $\pi(\beta,s_a)$ intersect each other.
Since $y(\alpha) = y(\beta)$, $x(s_a) \le x(s_b)$,
and all geodesic paths from the sites we consider are $\py$-monotone,
$y(s_a) \neq y(s_b)$.
We show that $s_b$ is a farthest site from no point $\pl(x)$
if $y(s_a) < y(s_b)$, and $s_a$ is a farthest site from no point $\pl(x)$
if $y(s_a) > y(s_b)$.

Consider the case that $y(s_a) < y(s_b)$.
Then $\piru(s_b)$ intersects $\pi(\beta,s_a)$ at a point, say $q$.
Since $x(s_a) \le x(s_b) \leq x(q)$, $y(s_a) < y(s_b) \leq y(q)$,
and $\pi(q,s_b)$ is $xy$-monotone,
$d(q,s_a) > d(q,s_b)$.
Then 
\begin{equation*}
	d(\beta,s_b) \leq d(\beta,q)+d(q,s_b) < d(\beta,q)+d(q,s_a) = d(\beta,s_a)\quad \text{and} \quad d(\beta,s_b) - d(\beta,s_a) < 0.	
\end{equation*}
Then $\db(b) - \db(a) < 0$ because $\db(b) = d(\beta,s_b) + C$ and $\db(a) = d(\beta,s_a) + C$, where $C = x(\beta) - x(\alpha)$.
Since $\dba(a) < \dba(b)$, we have $\da(b) - \da(a) < \db(b) - \db(a) < 0$,
$\da(b) < \da(a)$ and $\db(b) < \db(a)$. 
This implies that 
\begin{equation*}
	\min\{\da(b)+t,\db(b)-t\} < \min\{\da(a)+t,\db(a)-t\} \quad \text{for any } t\in[0,x(\beta)-x(\alpha)].
\end{equation*}
Thus, $s_b$ is a farthest site from no point $\pl(x)$.

Now consider the case that $y(s_a) > y(s_b)$.
Then $\pilu(s_a)$ intersects $\pi(\alpha,s_b)$ at a point, say $q'$.
Since $x(q') \le x(s_a) \le x(s_b)$, $y(q') \geq y(s_a) > y(s_b)$,
and $\pi(q',s_a)$ is $xy$-monotone,
$d(q',s_b) > d(q',s_a)$.
Thus,
\begin{equation*}
	\da(a) \leq d(\alpha,q')+d(q',s_a) < d(\alpha,q')+d(q',s_b) = \da(b) \quad \text{and} \quad \da(b) - \da(a) > 0.
\end{equation*}
Since $\dba(a) < \dba(b)$,
we have $0 < \da(b) - \da(a) < \db(b) - \db(a)$.
Then $\db(b) > \db(a)$ and $\da(b) > \da(a)$ which implies that 
\begin{equation*}
	\min\{\da(b)+t,\db(b)-t\} > \min\{\da(a)+t,\db(a)-t\} \quad \text{for any } t\in[0,x(\beta)-x(\alpha)].	
\end{equation*}
Therefore, $s_a$ is a farthest site from no point $\pl(x)$.
\end{proof}

\begin{algorithm}\caption{Pruning Sites on Rectangle}\label{al:prune}
\begin{algorithmic}
\Procedure{PruningSites}{$\{s_1,\ldots,s_m\},R$} \Comment{$x(s_1) \le \ldots \le x(s_m)$}
	\State stack $A \gets \emptyset$ \Comment{$\alpha$ is the top-left corner of $R$}
	\For{$i = 1$ to $m$} \Comment{$\beta$ is the top-right corner of $R$}
		\While{\texttt{True}}
			\If{$A=\emptyset$ or $d(\beta,s_i) - d(\alpha,s_i) \geq d(\beta,\stacktop(A)) - d(\alpha,\stacktop(A))$} 
				\State push $s_i$ into $A$
				\State \textbf{break}
			\ElsIf{$d(\alpha,s_i) \leq d(\alpha,\stacktop(A))$}
				\State ignore $s_i$
				\State \textbf{break}
			\ElsIf{$d(\alpha,s_i) > d(\alpha,\stacktop(A))$}
				\State pop $\stacktop(A)$ from $A$
			\EndIf
		\EndWhile
	\EndFor
   	\State \textbf{call} FarthestSites($A,R$)
\EndProcedure
\end{algorithmic}
\end{algorithm}

Lemmas~\ref{lem:diffmonotone} and~\ref{lem:diffprune} imply that the complexity of $D(x)$ is $O(m)$.
By Lemma~\ref{lem:diffprune},
we can remove the sites which never be the farthest sites
by comparing $\dba(a)$ and $\dba(b)$ 
for two indices $a$ and $b$.
This can be done in $O(m)$ time by Algorithm~\ref{al:prune}. 
After pruning, $\dba(a) \geq \dba(b)$
for every pair of remaining sites $s_a$ and $s_b$ with $a < b$.
Therefore, we can compute $D(x)$ in $O(m)$ time by Lemma~\ref{lem:diffmonotone}.
Then we can compute $d_v(x) = \max\{d(s^\alpha, \pl(x)),D(x),d(s^\beta, \pl(x))\}$.
in $O(m)$ time.
We update $\boundarylist[x_1(R),x_2(R)]$ in $O(m)$ time using $d_v(x)$.

There are $n$ top-side events, so we can handle the top-side events in $O(nm + n \log n)$ time.
In addition, we compute distances from $O(m)$ sites to each corner of $O(n)$ rectangles, and store them.
Using ray shooting queries emanating from the corners of rectangles, it takes $O(nm) + O(n \log n)$ time using $O(nm)$ space.
Therefore, we have the following lemma.

\begin{lemma}\label{time:top}
We can handle all top-side events in $O(nm + n \log n)$ time using $O(nm)$ space.
\end{lemma}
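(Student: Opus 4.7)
The plan is to decompose the cost of a single top-side event at rectangle $R$ into four contributions and then sum over the $n$ events: (i) the update of $\tree$ that removes the two child nodes $u,w$ of $R$ and inserts the merged node $v$; (ii) the update of $d_v(x)$ and $\boundarylist$ on the two lateral ranges $[x_1(v),x_1(R)]$ and $[x_2(R),x_2(v)]$; (iii) the computation of $d_v(x)$ and the new boundary points on the interior range $[x_1(R),x_2(R)]$; and (iv) a vertical ray-shooting query from each of $\alpha,\beta$ to locate $R_a,R_b$. Summing these across all top-side events must give $O(nm+n\log n)$ time and $O(nm)$ space.

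For (i), computing $\monolist_v[i]=\monolist_u[i]\vee\monolist_w[i]$ for $i=1,\ldots,m$ together with the balanced-tree rebalancing costs $O(m+\log n)$ per event and so $O(nm+n\log n)$ in total. For (ii), I would mimic the site-event analysis: because $d_v$ is piecewise linear with slopes $\pm 1$ and by Lemma~\ref{lem:reverseorder} the set of $x$'s on which $d_v$ must be raised forms a single interval $I$ bounded on at least one side by $x_1(v)$ or $x_2(v)$, the new piece of $d_v$ can be determined in $O(m)$ time, and the deletions from $\boundarylist$ inside $I$ are charged to the moments at which those boundary points were first inserted. Since the total complexity of $\rfmpy$ is $O(nm)$, the bookkeeping for all lateral updates summed over the sweep is $O(nm)$.

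For (iii), I would invoke Observation~\ref{ob:shortest_on_top} so that the distances from any $\pl(x)$ on the top of $R$ to sites in $\sset_k$ factor through $\alpha$ and $\beta$. First, I compute $d(\alpha,s)$ and $d(\beta,s)$ for every $s\in\sset_k$ in $O(m)$ time: if $s\in\sset_a$ (resp.\ $\sset_b$), the two-term $\min$ through $\alpha_a,\beta_a$ (resp.\ $\alpha_b,\beta_b$) is evaluated in $O(1)$ using the corner distances stored at the earlier top-side event of $R_a$ (resp.\ $R_b$); otherwise $\pi(\alpha,s)$ is $xy$-monotone by Lemma~\ref{lem:monotone} and the $L_1$ formula gives the distance in $O(1)$. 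Algorithm~\ref{al:prune} then uses Lemma~\ref{lem:diffprune} to discard, in $O(m)$ time, every site that cannot be farthest from any $\pl(x)$ on the top of $R$, leaving the survivors in the order required by Lemma~\ref{lem:diffmonotone}; Algorithm~\ref{al:farthest} finishes by producing $D(x)$ and its $O(m)$ boundary points in another $O(m)$. Combining with the lateral farthest sites $s^\alpha,s^\beta$ yields $d_v(x)=\max\{d(s^\alpha,\pl(x)),D(x),d(s^\beta,\pl(x))\}$ and updates $\boundarylist[x_1(R),x_2(R)]$ in $O(m)$. For (iv), a one-shot $O(n\log n)$-time, $O(n)$-space preprocessing of $\rectset$ supports each downward ray-shooting query in $O(\log n)$, contributing $O(n\log n)$ in total.

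Adding (i)--(iv) across the $n$ top-side events gives $O(nm+n\log n)$ time. The main obstacle I expect is the space accounting and the amortization of $\boundarylist$ updates: I must check that keeping $d(\alpha_k,s)$ and $d(\beta_k,s)$ for every $(k,s)$ pair ever accessed stays within $O(nm)$, and that the cumulative cost of deleting stale boundary points over all lateral and top-side updates does not exceed the $O(nm)$ total number of boundary points ever inserted. Both should follow from the $O(nm)$ complexity bound on $\rfmpy$ proved earlier, since each stored distance and each deleted boundary point can be charged to a distinct cell or edge of $\rfmpy$.
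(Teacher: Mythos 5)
Your decomposition into the four cost contributions, the amortization of boundary-point deletions against the $O(nm)$ total insertions, the use of Observation~\ref{ob:shortest_on_top} together with Algorithms~\ref{al:prune} and~\ref{al:farthest} for the middle range, and the $O(n\log n)$ ray-shooting preprocessing for locating $R_a,R_b$ all match the paper's own accounting. This is essentially the same proof; the only point you gloss over is why the lateral update on $[x_1(v),x_1(R)]$ behaves like a single site event: the paper observes that every newly reachable site $s_i$ (with $\monolist_u[i]=\texttt{F}$ and $\monolist_w[i]=\texttt{T}$) satisfies $x(s_i)\ge x$ for all $x$ in that range, so one site $s^*$ dominates the whole range and the update reduces to raising $d_v$ by a single linear piece, which is what makes Lemma~\ref{lem:reverseorder} and the single-interval argument applicable.
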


\subsection{Constructing the query data structure}
Initially, $\dspy=\emptyset$.
For each site event and top-side event, we update $d_v(x)$ and $\boundarylist(v)$
for node $v$ of $\tree$ corresponding to the event.
We insert a horizontal segment $h$ corresponding to each interval
which is updated at the event into $\dspy$,
and copy the boundary points into $h$.
For each site event, at most one horizontal line segment $h$ is inserted.
There is no boundary point in the interior of $h$,
so we can copy $h$ with two endpoints in $O(1)$ time.
For each top-side event, at most three horizontal line segments are inserted.
They have $O(m)$ boundary points by Lemma~\ref{lem:reverseorder},
so we can copy them in $O(m)$ time.
There are $O(n+m)$ horizontal segments 
and $O(nm)$ boundary points in $\dspy$,
so the query structure $\dspy$ uses $O(nm)$ space.

\subsubsection{Farthest-point queries}
Once $\dspy$ is constructed, 
we can find $f_\py(q)$ from a query point $q\in\freespacenoempty$.
We find the farthest sites from $q$ in the other three maps using 
their query data structures. 

By Corollary~\ref{co:boundary},
our query problem reduces to the vertical ray shooting queries.
We use the data structure by Giora and Kaplan~\cite{giora2009} 
for vertical ray shooting queries on $O(n+m)$ horizontal line segments in $\dspy$,
which requires $O((n+m) \log (n+m))$ time and $O(n+m)$ space for construction.
Let $h$ be the horizontal segment in $\dspy$ hit first
by the vertical ray emanating from $q$ going downwards.
We can find $h$ in $O(\log (n+m))$ time using the ray shooting structure.
If no horizontal segment in $\dspy$ is hit by the ray,
$q$ is $\py$-reachable from no site.
Otherwise, there are $O(m)$ boundary points on $h$, sorted in increasing order of 
$x$-coordinate. With those boundary points, we can find $f(q)$ for a query point $q$ 
in $O(\log m)$ time using binary search. Thus, a farthest-neighbor query takes 
$O(\log(n+m))$ time in total.

Once the farthest sites of $q$ for each of the four data structures is found, 
we take the sites with the largest distance among them as the farthest sites $f(q)$
of $\sset$ from $q$.
Combining Lemmas~\ref{time:site},~\ref{time:bottom} and~\ref{time:top} with query time,
we have the following theorem.

\begin{theorem}
We can construct a data structure for $m$ point sites in the presence of 
$n$ axis-aligned rectangular obstacles in the plane
in $O(nm + n \log n + m \log m)$ time and $O(nm)$ space that 
answers any $L_1$ farthest-neighbor query in $O(\log (n+m))$ time.
\end{theorem}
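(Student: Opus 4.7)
The plan is to assemble the theorem from the per-event analyses already proved and the query procedure described above, handled independently in each of the four axis directions. I would first sort all $n+m$ events (one per site, one per bottom side, one per top side of each rectangle) by $y$-coordinate; this takes $O(n\log n + m \log m)$ time. Then I would sweep the horizontal line $L$ upwards, maintaining $\tree$, $\boundarylist$, and $\distlist$, and process each event according to its type. By Lemma~\ref{time:site} the $m$ site events cost $O(nm)$ total, by Lemma~\ref{time:bottom} the $n$ bottom-side events cost $O(nm+n\log n)$, and by Lemma~\ref{time:top} the $n$ top-side events cost $O(nm+n\log n)$; summing gives $O(nm+n\log n+m\log m)$ preprocessing in $O(nm)$ space. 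As each event is processed, I would append to $\dspy$ the horizontal segment(s) corresponding to intervals whose distance function was just finalized, copying over the relevant boundary points; this adds only $O(1)$ or $O(m)$ per event and does not change the asymptotic bounds.

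For the query structure, I would then feed the $O(n+m)$ horizontal segments of $\dspy$ into the vertical ray-shooting structure of Giora and Kaplan, which uses $O((n+m)\log(n+m))$ preprocessing and $O(n+m)$ space, both dominated by what we already have. Repeating everything for the three other axis directions produces $\dsny$, $\dspx$, $\dsnx$, each within the same bounds, so the total cost stays $O(nm+n\log n+m\log m)$ time and $O(nm)$ space.

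For a query point $q$, I would, in each of the four structures, shoot an axis-aligned ray in the appropriate direction in $O(\log(n+m))$ time to find the horizontal segment $h$ first hit; if such an $h$ exists, binary-search its sorted list of boundary points in $O(\log m)$ time to recover the farthest site from $q$ among those reachable in that direction. Corollary~\ref{co:boundary} justifies reducing the query to a ray shooting plus binary search, since every edge of each farthest-point map is axis-aligned. Finally, by Lemma~\ref{lem:monotone} every geodesic path is monotone in at least one of the four directions, so the true farthest site appears among the four candidates returned, and I output the one of maximum distance. This gives an overall query time of $O(\log(n+m))$.

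The main obstacle is really the bookkeeping rather than any new idea: one must be careful that the segments appended to $\dspy$ at each event together form a planar subdivision consistent with $\rfmpy$, that each boundary point we store is genuinely the lower endpoint of a vertical edge of $\rfmpy$, and that the additive weights tracking the sweep offset are applied consistently when comparing $d_v(x)$ against a newly introduced site or rectangle top. Once these invariants are checked at each of the three event types, the theorem follows by adding up the contributions from Lemmas~\ref{time:site}, \ref{time:bottom}, and \ref{time:top} together with the logarithmic query cost from the ray shooting structure.
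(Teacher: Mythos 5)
Your proposal is correct and follows essentially the same route as the paper: sorting the events, invoking Lemmas~\ref{time:site}, \ref{time:bottom}, and \ref{time:top} for the sweep, building $\dspy$ (and its three counterparts) with the Giora--Kaplan vertical ray-shooting structure, and answering a query by ray shooting plus binary search in each of the four directions, returning the candidate of maximum distance justified by Lemma~\ref{lem:monotone} and Corollary~\ref{co:boundary}.
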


\section{Computing the Explicit Farthest-point Voronoi Diagram}
\label{sec:comp.fvd}
We construct the explicit farthest-point Voronoi diagram $\fvd=\fvd(\sset,\rectset)$
of a set $\sset$ of $m$ point sites in the presence of 
a set $\rectset$ of $n$ rectangular obstacles in the plane.
It is known that $\fvd$ requires $\Omega(nm)$ space~\cite{bae2009,moshe2001}.
It takes $\Omega(n \log n)$ time
to compute the geodesic distance between two points in $\freespace$~\cite{rezende1985}.
By a reduction from the sorting problem, it can be shown to take 
$\Omega(m \log m)$ time for computing 
the farthest-point Voronoi diagram of $m$ point sites in the plane. 
We present an $O(nm+n\log n+m\log m)$-time algorithm using $O(nm)$ space
that matches the time and space lower bounds.
This is the first optimal algorithm for constructing 
the farthest-point Voronoi diagram of points in the presence of obstacles
in the plane in both time and space. 

We construct $\dspy$ using the plane sweep in Section~\ref{sec:algorithms}.
During the plane sweep, we find all horizontal edges of $\rfmpy$ 
and insert them into $\dspy$ as segments.
We find all the lower endpoints of 
the vertical edges of $\rfmpy$ and insert them as boundary points in 
$\boundarylist$.
We also find the upper endpoints of vertical 
edges of $\rfmpy$. By connecting those endpoints using 
vertical segments appropriately, we can construct $\rfmpy$ from $\dspy$
in a doubly connected edge list without increasing the time and space complexities.
The other three maps can also be constructed in the same way
in the same time and space.

We construct the farthest-point Voronoi diagram $\fvd$ using the four maps explicitly.
Note that $f(p)=f_{\py}(p)$ for any point $p$ lying on the top side of $\boundbox$.
Thus, it suffices to compute $\fvd$ in $\freespace\cap\boundbox$.
For ease of description, we assume that the $x$-coordinates of the rectangles in $\rectset$ are all distinct.
We consider a vertical decomposition $\freespace_V$ 
obtained by drawing maximal vertical line segments contained in $\freespace\cap\boundbox$
of which each is extended from a vertical side of a hole of $\freespace$.
Let $V$ be a set of such vertical line segments.
$\freespace \setminus \bigcup_{\ell\in V}\ell$ consists of $O(n)$ connected faces. 
Each face is a rectangle since each hole of $\freespace$ is a rectangle and 
$\freespace$ is bounded by $\boundbox$.
See Figure~\ref{fig:sectionfvd}(a).

\begin{figure}[t]
  \begin{center}
    \includegraphics[width=\textwidth]{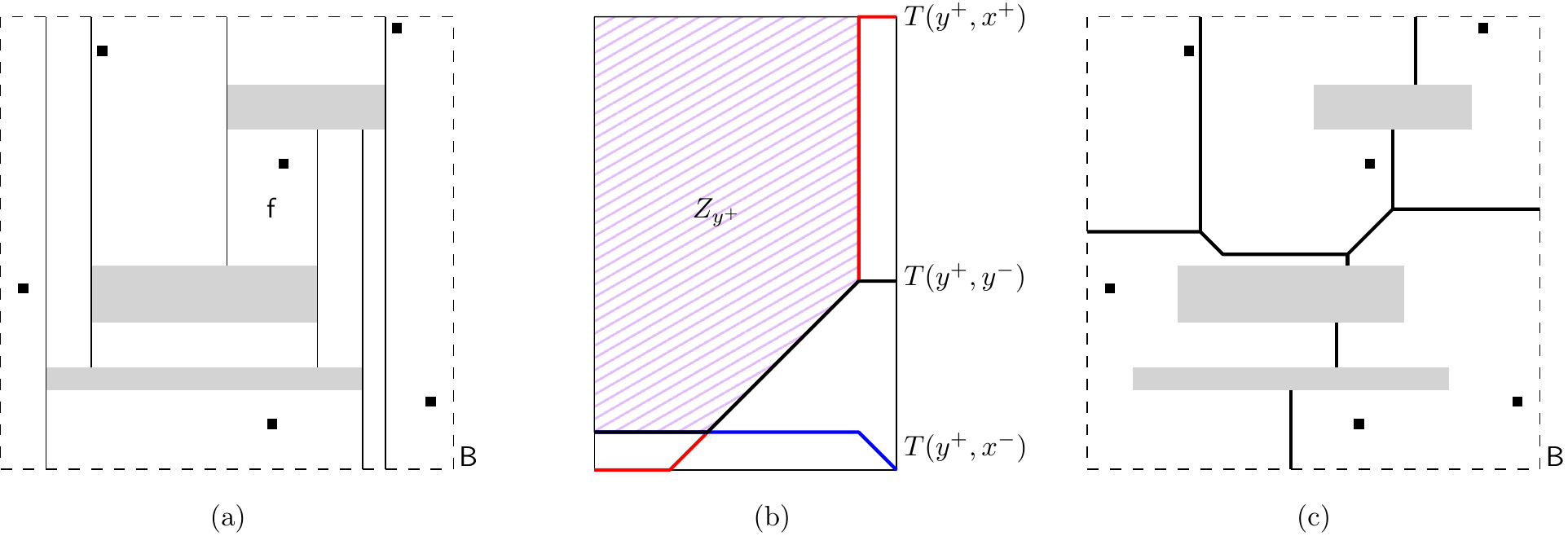}
    \caption{\small
(a) Vertical decomposition $\freespace_V$. $\face$ is a face of $\freespace_V$.
(b) $Z_\py$ is a region in $\face$ above the upper envelope of three traces, $\trace(\py,\ny)$, $\trace(\py,\px)$ and $\trace(\py,\nx)$.
(c) Explicit geodesic $L_1$ farthest-point Voronoi diagram $\fvd$.
}
    \label{fig:sectionfvd}
  \end{center}
\end{figure}

Any two farthest-point maps $\rfm_1, \rfm_2$ have a \emph{bisector} 
which consists of the points in $\freespace$ having the same distance
to their farthest sites in $\rfm_1$ and in $\rfm_2$.
The four maps define six bisectors.
In a face of $\freespace_V$, 
the six bisectors and some axis-aligned segments partition the face into 
\emph{zones} such that $\fvd$ restricted to one zone coincides with the diagram 
in the corresponding region of a farthest-point map.
Thus, we compute the bisectors between maps in each face of $\freespace_V$, 
partition the face into zones,
find the region of a farthest-point map corresponding to each zone,
and then glue the regions and faces to compute $\fvd$ completely.

\subsection{Bisectors of farthest-point maps}
We define the \emph{bisector} between 
$\rfm_\delta$ and $\rfm_{\delta'}$ as 
$\mbisector(\delta,\delta') = \{q\in\freespace \mid d_\delta(q) = d_{\delta'}(q)\}$
for any two distinct $\delta,\delta'\in\{\py,\ny,\px,\nx\}$.
We show some structural and combinatorial properties of the bisectors
between two farthest-point maps. 

\begin{figure}[t]
  \begin{center}
    \includegraphics[width=.5\textwidth]{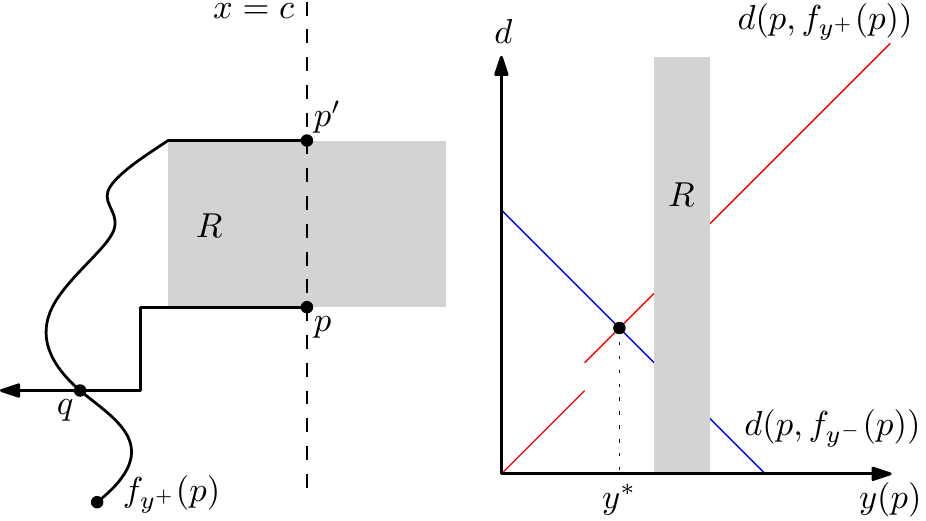}
    \caption{\small
Proof of Lemma~\ref{lem:mapbisect_y}.
$d(p,f_\py(p))$ can be discontinuous, but it is increasing.
$d(p,f_\ny(p))$ can be discontinuous, but it is decreasing.
At $p^*$ with $y(p^*) = y^*$, $d(p^*,f_\py(p^*)) = d(p^*,f_\ny(p^*))$ and $p^*$ is in $\mbisector(\py,\ny)$.
     }
    \label{fig:lemmaforfvd}
  \end{center}
\end{figure}

\begin{lemma}\label{lem:mapbisect_y}
Any vertical line intersects $\mbisector(\py,\ny)$ in at most one point.
\end{lemma}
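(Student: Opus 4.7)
The plan is to parametrize the vertical line $\ell$ by $y$-coordinate and show that on $\ell \cap \freespace$ the function $d_\py$ is strictly increasing in $y$ while $d_\ny$ is strictly decreasing, wherever each is finite. The difference $d_\py - d_\ny$ is then strictly increasing along $\ell$, so it vanishes at most once, giving $|\mbisector(\py,\ny) \cap \ell| \le 1$.

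For the monotonicity of $d_\py$, I would take $p_1, p_2 \in \ell \cap \freespace$ with $y(p_1) < y(p_2)$ and both having finite $d_\py$, and fix a farthest site $s \in f_\py(p_1)$. The two boundary curves $\pilu(s)$ and $\piru(s)$ of the $\py$-reachable region of $s$ are $y$-monotone (every turn goes horizontally or upward by construction), so moving upward along $\ell$ cannot exit that region; hence $s$ is also $\py$-reachable from $p_2$ and $d_\py(p_2) \ge d(s, p_2)$. Because every geodesic from $s$ to a $\py$-reachable point is $y$-monotone by Lemma~\ref{lem:monotone}, its $L_1$ length splits as the vertical displacement $y(\cdot) - y(s)$ plus the total horizontal variation $H(s, \cdot) \ge 0$. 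I would then show $H(s, p_2) \ge H(s, p_1)$, which gives $d(s, p_2) \ge d(s, p_1) + (y(p_2) - y(p_1)) > d_\py(p_1)$. The symmetric argument reflected through a horizontal line yields strict monotonicity of $d_\ny$.

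The main obstacle is this horizontal variation comparison. I would take a geodesic $\pi$ realizing $d(s, p_2)$, let $q$ be its last point at height $y(p_1)$, and split $\pi$ at $q$. In the clean case where the horizontal segment from $q$ to $p_1$ lies entirely in $\freespace$, the initial subpath concatenated with that segment bounds $d(s,p_1) \le d(s,q) + |x(q) - x(p_1)|$, while the remaining subpath from $q$ to $p_2$ has horizontal variation at least $|x(q) - x(p_1)|$ since $x(p_2) = x(p_1)$; combining these two bounds gives $H(s, p_2) \ge H(s, p_1)$. The delicate case is when a rectangle $R$ separates $q$ and $p_1$ at height $y(p_1)$: by the $y$-monotonicity of $\pi$ and the fact that $p_2$ lies in $p_1$'s column, $\pi$ must then traverse $R$'s height range on $q$'s side and cross above $R$ before reaching $p_2$, incurring horizontal cost that absorbs the detour cost used in the upper bound for $d(s,p_1)$; a case analysis of where $\pi$ crosses $R$'s $x$-range still yields the required inequality.
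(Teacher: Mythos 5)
Your overall strategy is the same as the paper's: show that $d_{\py}$ is strictly increasing and $d_{\ny}$ strictly decreasing along a vertical line, so $d_{\py}-d_{\ny}$ vanishes at most once. The paper establishes the monotonicity by using Corollary~\ref{co:boundary} (all edges of $\rfmpy$ are axis‑aligned) to reduce to discrete transitions, and handles the jump across a rectangle with Observation~\ref{ob:shortest_on_top} together with a crossing argument between $\pild(p)$ and a geodesic through the top corner of the rectangle. You instead fix a farthest site $s\in f_\py(p_1)$, use upward‑closure of $\py$‑reachability (which is correct), and compare geodesic lengths directly via the ``vertical displacement plus horizontal variation'' decomposition. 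In the clean case your split-at-$q$ argument is correct.

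The gap is in the delicate case, and it is genuine. The inequality $d(s,p_1)\le d(s,q)+d(q,p_1)$ can be very loose when $qp_1$ is blocked, because the true geodesic from $s$ to $p_1$ need not pass anywhere near $q$, and the extra length in $\pi_2$ from going over $R$ does not in general match the detour in $d(q,p_1)$. Concretely, take $s=(10,-10)$, $R=(3,7)\times(-5,5)$, $p_1=(0,0)$, $p_2=(0,5)$, and the geodesic $\pi=(10,-10)\to(10,5)\to(0,5)$. Here $q=(10,0)$, $|\pi_2|=d(q,p_2)=15$, but $d(q,p_1)=20$ (one must detour around $R$), so your split gives $d(s,p_2)-d(s,p_1)\ge |\pi_2|-d(q,p_1)=-5$, which is useless — the detour in $d(q,p_1)$ is \emph{not} absorbed by the horizontal cost of $\pi_2$. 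The desired inequality nonetheless holds in this example ($d(s,p_1)=20$, $d(s,p_2)=25$), but only because $s\to p_1$ takes the other route $(10,-10)\to(3,-10)\to(3,0)\to(0,0)$; equivalently, one can rescue the argument by choosing the \emph{other} geodesic $\pi'=(10,-10)\to(0,-10)\to(0,5)$ from $s$ to $p_2$, for which $q'=p_1$ and the clean case applies. So your proof needs an additional ingredient — either a rule for selecting the right geodesic from $s$ to $p_2$ so that $qp_1$ is always unobstructed (and a proof that such a choice exists), or a fundamentally different way to certify $H(s,p_2)\ge H(s,p_1)$ that does not rely on the split at $q$. As written, the ``case analysis of where $\pi$ crosses $R$'s $x$-range still yields the required inequality'' is not justified, and the counterexample above shows it fails for the geodesic you split. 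The paper sidesteps this by working with $\rfmpy$'s combinatorial structure rather than with a single geodesic to $p_2$.
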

\begin{proof}
Imagine that a point $p$ moves vertically upwards.
We show that $d_\py(p)$ increases as $p$ moves.
By Corollary~\ref{co:boundary}, $f_\py(p)$ does not change
until $p$ meets a horizontal edge of $\rfmpy$, 
or a rectangle of $\rectset$, and thus
$d_\py(p)$ increases as $p$ moves.
When $p$ meets a horizontal edge of $\rfmpy$,
$f_\py(p)$ changes, but $d_\py(p)$ still increases
by the definition of $f_\py(p)$.

Consider the case that $p$ meets the bottom side of a rectangle $R\in\rectset$.
See Figure~\ref{fig:lemmaforfvd}.
Let $p'$ be the point on the top side of $R$ with $x(p') = x(p)$.
Then there is a geodesic path $\pi(p',f_\py(p))$
that passes through the top-left (or the top-right) corner of $R$.
Without loss of generality, assume it passes the top-left corner of $R$.
By the definition of $f_\py(p)$, $\pild(p)$ intersects $\pi(p',f_\py(p))$ at a point, say $q$.
Then, \[d(p,f_\py(p)) \leq d(p,q) + d(q,f_\py(p)) < d(p',f_\py(p)) \leq d(p',f_\py(p')).\]
Therefore, $d(p,f_\py(p))$ still increases as $p$ jumps to $p'$.
Likewise, $d(p,f_\ny(p))$ decreases as $p$ moves vertically upwards.
Then $d(p,f_\py(p))=d(p,f_\ny(p))$ occurs at most once at a moment
for $p$ moving vertically upwards.
Thus, any vertical line intersects $\mbisector(\py,\ny)$ in at most one point.
\end{proof}

By Lemma~\ref{lem:mapbisect_y},
$\mbisector(\py,\ny)$ is $x$-monotone consisting of
segments of slopes $0$, $+1$, or $-1$.

\begin{lemma}\label{lem:mapbisect_xy}
For any vertical line segment $\ell$ contained in $\freespace$, $\ell\cap\mbisector(\py,\px)$ consists of at most one connected component.
\end{lemma}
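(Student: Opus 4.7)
The plan is to mimic the monotonicity strategy of Lemma~\ref{lem:mapbisect_y}, but applied to the function $g(p) := d_\py(p)-d_\px(p)$ viewed as a function of $y(p)$ along $\ell$. I aim to show that $g$ is non-decreasing as $p$ moves upward along $\ell$; since the zero set of a non-decreasing real function on an interval is itself a connected sub-interval (possibly empty, a single point, or a closed interval), this immediately gives that $\ell\cap\mbisector(\py,\px)$ is at most one connected component.

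First, I reuse the analysis from Lemma~\ref{lem:mapbisect_y} to record that $d_\py(p)$ is non-decreasing in $y(p)$ along $\ell$: whenever $f_\py(p)$ is held fixed, the $\py$-monotone geodesic can be extended by a vertical segment, giving $d_\py$ slope $+1$; at transitions across horizontal edges of $\rfmpy$ (either bisector edges or $b$-edges), $d_\py$ can only jump upward. I then run a parallel slope analysis for $d_\px(p)$. Fix a site $s = f_\px(p)$ on a small sub-interval of $\ell$. Since the geodesic from $s$ to $p$ is $\px$-monotone, its length equals $(x(p)-x(s))$ plus the total vertical distance traveled. Using the eight-region classification of Lemma~\ref{lem:monotone}, a case analysis shows $d(s,p)$ is piecewise linear in $y(p)$ with slope $\pm 1$: slope $+1$ in $\textsf{reg}_2$ of $s$ (where $p$ is above $s$), slope $-1$ in $\textsf{reg}_8$ (where $p$ is below $s$), and piecewise $\pm 1$ in $\textsf{reg}_1$, where the geodesic is forced to detour above or below an obstacle. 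Consequently, on any sub-interval on which both $f_\py(p)$ and $f_\px(p)$ are constant, $g(p)$ has slope $+1-(\pm 1)\in\{0,+2\}$, hence is non-decreasing.

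It remains to handle transitions of the farthest sites. A transition of $f_\py(p)$ occurs at a horizontal edge of $\rfmpy$ and produces an upward jump of $d_\py$, while $d_\px$ is continuous there, so $g$ jumps up. A transition of $f_\px(p)$ occurs at a horizontal edge of $\rfmpx$; by the rotational analog of the construction of $\rfmpy$ (whose $b$-edges are horizontal), the $b$-edges of $\rfmpx$ are vertical, so every horizontal edge of $\rfmpx$ must be part of a bisector, across which $d_\px$ is continuous. Combining these, $g$ is non-decreasing along $\ell$ and its zero set is connected. The main obstacle will be the $\textsf{reg}_1$ case in the slope analysis of $d_\px$: since the geodesic is $\px$-monotone but not $\py$-monotone, it must detour either above or below some obstacle, and $d(s,p)$ equals the pointwise minimum of two affine functions of $y(p)$ with slopes $+1$ and $-1$; once this computation is carried out carefully, the slope-in-$\{-1,+1\}$ property, and hence the monotonicity of $g$, follows.
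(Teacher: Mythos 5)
Your proof takes essentially the same route as the paper's: the paper shows that $d_\py(p)-y(p)$ is non-decreasing and $d_\px(p)-y(p)$ is non-increasing along $\ell$, which together are exactly equivalent to your claim that $g(p)=d_\py(p)-d_\px(p)$ is non-decreasing. Both rest on the same underlying slope facts---$d_\py$ has slope $+1$ and $d_\px$ has slope $\pm1$ as functions of $y$ along $\ell$---which the paper obtains by citing the $\pm1$-slope structure of the plane-sweep distance functions in $\distlist$, while you re-derive them via the eight-region case analysis of Lemma~\ref{lem:monotone}.
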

\begin{proof}
The proof is similar to the one for Lemma~\ref{lem:mapbisect_y}.
Consider a vertical line segment $\ell$ contained in $\freespace$.
Imagine that a point $p$ moves vertically upwards from the lower endpoint of $\ell$ to the upper endpoint.
We show that $d_\py(p)-y(p)=d_1(p)$ does not decrease.
By Corollary~\ref{co:boundary}, $f_\py(p)$ does not change
until $p$ meets a horizontal edge of $\rfmpy$, 
or a rectangle of $\rectset$, and thus
$d_1(p)$ remains the same as $p$ moves.
When $p$ meets a horizontal edge of $\rfmpy$, $f_\py(p)$ changes and
$d_1(p)$ increases by the definition of $f_\py(p)$.

We also show that $d_\px(p)-y(p)=d_2(p)$ does not increase.
Recall that every distance function in $\distlist$ consists of 
pieces of slopes $1$ or $-1$ during the plane sweep in Section~\ref{sec:algorithms}.
Therefore, $d_2(p)$ does not increase. Thus there is at most one connected component satisfying $d_\py(p)=d_\px(p)$ in $\ell$.
\end{proof}

We can also show that Lemma~\ref{lem:mapbisect_xy} holds for $\mbisector(\py,\nx)$.
See Figure~\ref{fig:mbisect} for bisectors. Lemmas~\ref{lem:mapbisect_y} and~\ref{lem:mapbisect_xy} imply that $f(p')=f_\py(p')$ if $p'p$ is a vertical line segment contained in $\freespace$ with $y(p')>y(p)$, and $f(p)=f_\py(p)$. 
Thus, these three bisectors contained in a face of $\freespace_V$ are 
$x$-monotone.

\begin{figure}
	\begin{center}
	  \includegraphics[width=\textwidth]{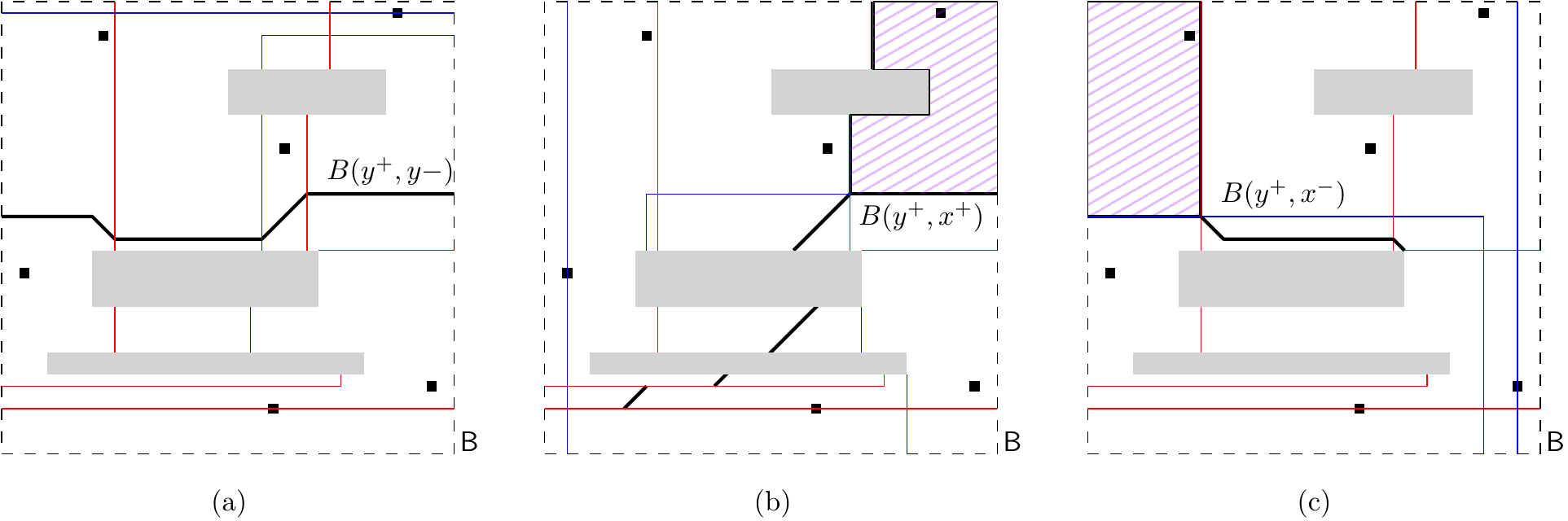}
	  \caption{\small
  Three bisectors of $\rfmpy$ (red) and the other three farthest-point maps (blue).
  (a) $\mbisector(\py,\ny)$. (b) $\mbisector(\py,\px)$. (c) $\mbisector(\py,\nx)$. Bisectors may contain a two-dimensional region.}
	  \label{fig:mbisect}
	\end{center}
  \end{figure}

For each face $\face$ of $\freespace_V$, we compute the portion of $\mbisector(\py,\ny)$ contained in $\face$.
As $\mbisector(\py,\ny)\cap\face$ is $x$-monotone, 
we sweep a vertical line $L$ from $x_1(\face)$ to $x_2(\face)$ maintaining a point $p\in \face\cap L$ with $d_\py(p) = d_\ny(p)$.
First, we compute $p$ lying on the left side of $\face$ as follows.
There are $O(m)$ intersections of the left side of $\face$ with the horizontal segments of $\dspy$ and $\dsny$
as any vertical line segment contained in $\freespace$ intersects $O(m)$ horizontal segments of them.
For each intersection point $q$, 
we compute $d_\py(q)$ and $d_\ny(q)$,
and find two consecutive points $q_1$ and $q_2$ among the intersection points by $y$-coordinate
such that $d_\py(q_1) \leq d_\ny(q_1)$ and $d_\ny(q_2) \leq d_\py(q_2)$.
We can compute $q_1$ and $q_2$ in $O(m)$ time
using $\dspy$ and $\dsny$. 
Then we compute $p$ lying on $q_1q_2$.

Having the distance functions, we have the slope of the bisector incident to $p$.
Let $\vec{\ell}$ be the half-line from $p$ with the slope going rightward. 
We find the first point $p'$ on $\vec{\ell}$ from $p$ at which
the slope of $d_\py(p')$ or $d_\ny(p')$ changes.
Since the slope of $d_\py(p')$ changes at most once within a cell of $\rfmpy$, 
we can find $p'$ in time linear to the complexity of the cells containing $p$ of the maps.
If there are two or more such points, $p$ is 
the point with the maximum $y$-coordinate among them.

There may be no point $p$ satisfying $d_\py(p) = d_\ny(p)$ if 
there is a point $q\in \face \cap L$ such that $d_\py(q')>d_\ny(q')$ for
every point $q'$ lying above $q$, and
$d_\py(q')<d_\ny(q')$ for
every point $q'$ lying below $q$.
We maintain the point $q$ in this case.
Note that $q$ follows a horizontal segment during the plane sweep,
and thus we can find the first point $p$ with $d_\py(p) = d_\ny(p)$ using a horizontal half-line from $q$.

During the plane sweep, $p$ or $q$ moves along $\mbisector(\py,\ny)$ rightwards until it meets the right side of $\face$.
We compute the other bisectors in $\face$ similarly.

We compute the \emph{trace} $\trace(\py,\ny)$ of $p$ and $q$ during the sweep.
Observe that every vertical line intersecting $\face$ also intersects
the trace in one point $t$. Moreover, if the line intersects 
$\mbisector(\py,\ny)\cap \face$, $t$ is the topmost point of the intersection.
Since we have $\rfmpx$ and $\rfmnx$,
we can compute the two traces $\trace(\py,\px)$ and $\trace(\py,\nx)$ similarly.

We observe that each bisector and trace in $\face$ has $O(m)$ complexity.
We get the distance functions using 
$\dspy$, $\dsny$, $\dspx$, and $\dsnx$ which consist of $O(n+m)$ line segments
and support $O(\log (n+m))$ query time.
After computing those distance functions, 
the traces can be constructed in time linear to their complexities.
Thus, in total it takes $O(nm + n \log n + m \log m)$ time to construct the traces for all faces.

\subsection{Partitioning \texorpdfstring{$\face$}{f} into zones}
With the three traces $\trace(\py,\ny)$, $\trace(\py,\px)$, $\trace(\py,\nx)$ in $\face$,
we compute the zone $Z_\py$ in $\face$ corresponding to $\rfmpy$ in $\face$.
Let $T$ be an upper envelope of $\trace(\py,\ny)$, $\trace(\py,\px)$ and $\trace(\py,\nx)$.
Then $Z_\py$ is the set of points lying above $T$ in $\face$. 
See Figure~\ref{fig:sectionfvd}(b). 
The following lemma can be shown using the lemmas in Appendix~\ref{apx:pro.bisectors}.
\begin{lemma}\label{lem:zone}
For any point $p\in Z_\py$, $f(p)=f_\py(p)$. 
\end{lemma}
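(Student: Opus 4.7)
The plan is to prove the three inequalities
\[
d_\py(p)\ge d_\ny(p),\qquad d_\py(p)\ge d_\px(p),\qquad d_\py(p)\ge d_\nx(p)
\]
for every $p\in Z_\py$. Once these hold, the conclusion follows from the discussion at the end of Section~\ref{sec:fpm}: by Lemma~\ref{lem:monotone} every geodesic path from $p$ to any site is $\py$-, $\ny$-, $\px$-, or $\nx$-monotone, so $f(p)$ is realised inside one of the four farthest-point maps, and the inequalities above single out $\rfmpy$. The cases where one of $d_\ny(p), d_\px(p), d_\nx(p)$ equals $-\infty$ (i.e.\ $p$ is not reachable in that direction from any site) are trivial, so I would assume each quantity is finite.

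For $d_\py(p)\ge d_\ny(p)$ I would fix the vertical line $L$ through $p$ and look at its intersection $t$ with $\trace(\py,\ny)$, which is unique by the construction of the trace. Since $p\in Z_\py$ lies strictly above the upper envelope $T$ of the three traces, we have $y(p)>y(t)$. In the proof of Lemma~\ref{lem:mapbisect_y} it was shown that as a point moves vertically upward along $L$, $d_\py$ is non-decreasing and $d_\ny$ is non-increasing; moreover the trace was defined to record the highest point at which these two functions cross (or the horizontal segment on which the sign of their difference flips when no true crossing exists in $\face$). Hence moving from $t$ up to $p$ can only increase $d_\py$ and decrease $d_\ny$, yielding $d_\py(p)\ge d_\ny(p)$.

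The inequalities involving $d_\px$ and $d_\nx$ would follow from the analogue used in the proof of Lemma~\ref{lem:mapbisect_xy}. There, along any vertical segment in $\freespace$, $d_\py(p)-y(p)$ is non-decreasing while $d_\px(p)-y(p)$ is non-increasing; subtracting $y(p)$ from both sides is harmless for the comparison, so the same argument as above, applied to the crossing point of $L$ with $\trace(\py,\px)$, gives $d_\py(p)\ge d_\px(p)$. The symmetric statement for $\rfmnx$ supplies $d_\py(p)\ge d_\nx(p)$.

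The main obstacle will be the degenerate configurations, where a bisector between two maps contains a two-dimensional region or where a trace temporarily follows a horizontal segment along the top side of a rectangle. I would rely on the fact that each trace is defined so that its intersection with any vertical line in $\face$ is a single point, namely the topmost point of the corresponding bisector slab, which is exactly what lets us translate ``above $T$'' into ``$d_\py$ dominates the other three''. The lemmas in Appendix~\ref{apx:pro.bisectors} would be invoked to handle these degeneracies cleanly and to certify the topmost-point characterisation of the traces.
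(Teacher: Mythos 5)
Your proposal is correct and takes essentially the same route the paper intends: the monotone behaviour of the distance functions along vertical segments established in the proofs of Lemmas~\ref{lem:mapbisect_y} and~\ref{lem:mapbisect_xy}, the characterisation of each trace as the topmost crossing (or sign-flip) point on every vertical line of the face, and the Section~\ref{sec:fpm} observation that $f(p)$ is realised among the four maps. One minor tightening: to get the set equality $f(p)=f_\py(p)$ (not merely $f_\py(p)\subseteq f(p)$) you should note that the inequalities are in fact strict for $p$ strictly above the traces, which follows from Lemmas~\ref{lem:mapbisect_y} and~\ref{lem:mapbisect_xy} since no bisector point of $\mbisector(\py,\delta)$ can lie strictly above the corresponding trace on the vertical line through $p$.
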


Similarly, we define the other three zones $Z_\ny$, $Z_\px$, and $Z_\nx$.
Note that $d_\delta(p)>d_{\delta'}(p)$ for every point $p\in Z_\delta$ for 
distinct $\delta,\delta'\in\{\py,\ny,\px,\nx\}$. 
By Lemma~\ref{lem:zone}, $\fvd\cap Z_{\py}$ coincides with $\rfmpy$. 
We copy the corresponding farthest-point map of $\delta$ into $Z_\delta$ 
for each $\delta\in\{\py,\ny,\px,\nx\}$.

We call $\face\setminus(Z_\py\cup Z_\ny\cup Z_\px\cup Z_\nx)$
the bisector zone. Every point $p$ in the bisector zone 
lies on a bisector of two or more maps. Thus, for each bisector of two maps,
we copy one of the maps into the corresponding zone.

\subsection{Gluing along boundaries}
We first glue the zones along their boundaries in each face of $\freespace_V$.
For each edge $e$ incident to two zones, we check whether the two cells
incident to the edge have the same farthest site or not.
If they have the same farthest site, $e$ is not a Voronoi edge of $\fvd$.
Then we remove the edge and merge the cells into one. 
If they have different farthest sites, $e$ is a Voronoi edge of $\fvd$.
This takes $O(nm)$ time in total, which is linear to the number of 
Voronoi edges and cells in $\fvd$.

After gluing zones in every face,
we glue the faces of $\freespace_V$ along their boundaries. 
Since $e$ is a vertical line segment and incident to more than two cells, 
we divide $e$ into pieces
such that any point in the same piece $e'$ is incident to 
the same set of two cells.
If both cells incident to $e'$ have the same farthest site, 
$e'$ is not a Voronoi edge of $\fvd$.
Then we remove the edge and merge the cells.
If they have different farthest sites, $e'$ is a Voronoi edge of $\fvd$.
There are $O(n)$ vertical line segments in $V$
and each of them intersects $O(m)$ cells of $\fvd$, 
so it takes $O(nm)$ time in total.
Then we obtain the geodesic $L_1$ farthest-point Voronoi diagram $\fvd$ 
explicitly. 
See Figure~\ref{fig:sectionfvd}(c).

\begin{theorem}
We can compute the $L_1$ farthest-point Voronoi diagram of $m$ point sites 
in the presence of $n$ axis-aligned rectangular obstacles in the plane
in $O(nm + n \log n + m \log m)$ time and $O(nm)$ space.
\end{theorem}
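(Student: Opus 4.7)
The plan is to follow the outline given in Section~\ref{sec:comp.fvd}, assembling the farthest-point Voronoi diagram $\fvd$ from the four farthest-point maps $\rfmpy, \rfmny, \rfmpx, \rfmnx$ and their query structures $\dspy, \dsny, \dspx, \dsnx$. First I would run the plane-sweep algorithm of Section~\ref{sec:algorithms} four times, once per axis direction, and have each sweep also emit the horizontal and vertical edges of the corresponding farthest-point map as a doubly connected edge list. By Lemmas~\ref{time:site},~\ref{time:bottom}, and~\ref{time:top}, each sweep runs in $O(nm + n\log n + m\log m)$ time and uses $O(nm)$ space, and since each map has $O(nm)$ complexity by the counting argument in Section~\ref{sec:fpm}, recording the maps does not asymptotically increase the cost.

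Next I would compute the vertical decomposition $\freespace_V$ in $O(n\log n)$ time by the standard sweep, producing $O(n)$ rectangular faces. For each face $\face$ I would build the three traces $\trace(\py,\ny)$, $\trace(\py,\px)$, and $\trace(\py,\nx)$ using Lemmas~\ref{lem:mapbisect_y} and~\ref{lem:mapbisect_xy}, which guarantee each trace is $x$-monotone inside $\face$ and has complexity $O(m)$: sweep a vertical line rightwards across $\face$, initialize the tracked point on the left side of $\face$ by comparing $d_\py$ and $d_{\delta'}$ at the $O(m)$ intersection points of that side with segments of $\dspy$ and $\dsdelta'$, and then advance along $\mbisector$ by finding the next slope change through the relevant cells of the two maps. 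Summing $O(m)$ work over $O(n)$ faces gives $O(nm)$ time for all traces, once the four query structures are built.

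Having the traces in $\face$, I would partition $\face$ into zones $Z_\py, Z_\ny, Z_\px, Z_\nx$ and the bisector zone by taking the upper envelope of the three $\py$-traces (and similarly for the other three directions), then copy the restriction of $\rfm_\delta$ into $Z_\delta$ using Lemma~\ref{lem:zone}. For the bisector zone, each bisector between two maps is already traced, so the copies from either side can be laid down consistently. Finally, I would glue: inside each face, walk each edge shared by two zones and delete it (merging the incident cells) whenever the two cells have the same farthest site, otherwise keep it as a Voronoi edge; then, along each vertical segment of $V$, split it into maximal pieces having the same pair of incident cells and apply the same test. Both gluing steps run in time linear in the $O(nm)$ total complexity of the diagram.

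The main obstacle, conceptually, is justifying the zone decomposition: that is, making sure that $\fvd$ restricted to $Z_\delta$ really coincides with $\rfm_\delta$ and that the bisector zone is handled coherently, which is exactly the content of Lemma~\ref{lem:zone} and the supporting lemmas deferred to the appendix. The main algorithmic obstacle is keeping the trace construction within $O(m)$ per face; this relies on the $x$-monotonicity supplied by Lemmas~\ref{lem:mapbisect_y} and~\ref{lem:mapbisect_xy} together with the observation that the slope of $d_\delta$ changes only $O(1)$ times per cell of $\rfm_\delta$, so that advancing the tracked point along the current bisector piece charges each step to a cell of one of the four maps visited by the trace. Once these pieces are in place, combining the four sweeps ($O(nm + n\log n + m\log m)$), trace construction ($O(nm)$), zone partition and copy ($O(nm)$), and gluing ($O(nm)$) yields the claimed time and space bounds.
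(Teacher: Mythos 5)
Your proposal follows essentially the same route as the paper: run the plane sweep four times to obtain the four farthest-point maps and their query structures, build the vertical decomposition $\freespace_V$, trace the three bisectors per face using the $x$-monotonicity from Lemmas~\ref{lem:mapbisect_y} and~\ref{lem:mapbisect_xy}, partition each face into zones via the upper envelope of the traces, apply Lemma~\ref{lem:zone} to copy the appropriate map into each zone, and glue first across zone boundaries and then across the vertical segments of $V$. The accounting and the use of the key lemmas match the paper's argument, so there is nothing substantive to add.
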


\begin{corollary}
We can compute the $L_1$ geodesic center of $m$ point sites 
in the presence of $n$ axis-aligned rectangular obstacles in the plane
in $O(nm + n \log n + m \log m)$ time and $O(nm)$ space.
\end{corollary}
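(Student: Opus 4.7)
The plan is to reduce the computation of the $L_1$ geodesic center directly to the farthest-point Voronoi diagram $\fvd$ that we have just constructed. By the preceding theorem, $\fvd$ is available in $O(nm+n\log n+m\log m)$ time and $O(nm)$ space. Recall that the geodesic center is $\{p\in\freespace : d(p,f(p)) = \min_{q\in\freespace} d(q,f(q))\}$, and that $f(p)$ is constant on the interior of each cell of $\fvd$. So it is enough to (i) compute, for each cell $C$ of $\fvd$, the minimum of $d(\cdot,s_C)$ over $\overline{C}$, where $s_C$ is the farthest site associated with $C$; (ii) take the global minimum $d^*$ across all cells; and (iii) return the locus in each cell that attains $d^*$.

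First, I would iterate over all cells of $\fvd$. The total combinatorial complexity of $\fvd$ is $\Theta(nm)$, so a linear traversal is affordable. Within each cell $C$ the farthest site $s_C$ is fixed, and by Lemma~\ref{lem:zone} together with the gluing procedure of Section~\ref{sec:comp.fvd}, $C$ is contained in a single zone $Z_\delta$ of a single face $\face$ of $\freespace_V$, on which $d(\cdot, s_C) = d_\delta(\cdot)$ is a piecewise-linear function with slopes $0$, $+1$, or $-1$ in each coordinate. These pieces are precisely the pieces already recorded in the distance list $\distlist$ produced by the plane sweep of Section~\ref{sec:algorithms}. Scanning the pieces that intersect $C$ lets us compute $\min_{p\in \overline{C}} d(p,s_C)$ in time proportional to the complexity of $C$, for a total of $O(nm)$ time across all cells.

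Second, I would compute $d^* := \min_C \min_{p\in \overline{C}} d(p,s_C)$ and, for each cell achieving this minimum, output the locus $\{p \in \overline{C} : d(p,s_C) = d^*\}$. Since $d(\cdot, s_C)$ is piecewise linear with slopes in $\{0,\pm 1\}$ on each piece of $C$, the attainment set within any single piece is either empty, a single vertex, or an axis-aligned segment, so the loci can be emitted in time linear in the complexity of $\fvd$. Their union is the $L_1$ geodesic center; by Choi et al.~\cite{choi1998} this has complexity $\Theta(nm)$, matching the output bound.

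The main obstacle to watch out for is that a cell of $\fvd$ is in general not convex and the distance function $d(\cdot, s_C)$ is not a single linear function on it, so a naive closed-form minimization fails. This is resolved by observing that the finer subdivision induced by the pieces of $\distlist$ refines each cell into $O(1)$-complexity subregions on each of which the distance function is linear; since these pieces were already constructed and stored during the plane sweep, the per-cell minimization amortizes against the size of $\distlist$, which is $O(nm)$. Adding this $O(nm)$ post-processing to the construction of $\fvd$ yields the claimed $O(nm + n\log n + m\log m)$ time and $O(nm)$ space bounds for the geodesic center.
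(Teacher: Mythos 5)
Your reduction is essentially the paper's own argument (stated only as ``we can find the center from the farthest-point Voronoi diagram in time linear to the complexity of the diagram''): the center is the minimizer of $d(\cdot,f(\cdot))$, which restricted to each cell of $\fvd$ equals the distance to that cell's fixed farthest site, so a single traversal of the $\Theta(nm)$-complexity diagram after its construction yields the center within the same time and space bounds. One small correction that does not affect the approach: after gluing, a cell of $\fvd$ may span several zones and several faces of $\freespace_V$, and $\distlist$ is a transient sweep-line structure rather than a stored planar refinement, so the per-cell piecewise-linear minimization should be charged to the stored maps $\rfm_\delta$ and their query structures (whose total size is $O(nm)$), not to a single zone per cell.
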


\section{Concluding Remarks}
We present an optimal algorithm for computing the farthest-point Voronoi diagram
of point sites in the presence of rectangular obstacles. However, our algorithm may
not work for more general obstacles as it is, because some properties we use for
the axis-aligned rectangles including their convexity may not hold any longer.
Our results, however, may serve as a stepping stone to closing the gap to the optimal
bounds.

\bibliography{paper}

\end{document}